\newif\if@restonecol
\newcommand{\formulae}{formul\ae\xspace}
\newcommand{\dpll}[1]{{\sc DPLL}\xspace}
\newcommand{\COMMENT}[1]{}
\newcommand{\ui}{\ensuremath{\underline i}}
\newcommand{\uj}{\ensuremath{\underline j}}
\newcommand{\uk}{\ensuremath{\underline k}}
\newcommand{\ux}{\ensuremath{\underline x}}
\newcommand{\uy}{\ensuremath{\underline y}}
\renewcommand{\int}{\ensuremath {\mathcal I}}
\newcommand{\ta}{\ensuremath{{\bf a}}\xspace}
\newcommand{\tc}{\ensuremath{{\bf c}}\xspace}
\newcommand{\td}{\ensuremath{{\bf d}}\xspace}
\newcommand{\tv}{\ensuremath{{\bf v}}\xspace}
\newcommand{\tw}{\ensuremath{{\bf w}}\xspace}
\newcommand{\tu}{\ensuremath{{\bf u}}\xspace}
\newcommand{\ttb}{\ensuremath{{\bf t}}\xspace}
\theoremstyle{definition}
\newtheorem{definition}{Definition}[section]
\newtheorem{example}{Example}[section]
\theoremstyle{plain}
\newtheorem{theorem}{Theorem}[section]
\newtheorem{proposition}{Proposition}[section]
\begin{document}

%
%
\title{Abstraction and Acceleration in SMT-based Model-Checking for Array Programs}

%
%
\author{Francesco Alberti}{1}
\author{Silvio Ghilardi}{2}
\author{Natasha Sharygina}{1}

%
%
\affiliation{1}{\USIINF}
\affiliation{2}{Universit\`a degli Studi di Milano, Milan, Italy}

%
%
\TRnumber{2012-1}

%
%
\date{October~2012}
\revised{April~2013}

%
%


\maketitle

\begin{abstract}
Abstraction (in its various forms) is a powerful established technique in model-checking; 
still, when unbounded data-structures are concerned, it cannot always cope with divergence phenomena 
in a satisfactory way. Acceleration is an approach which is widely used to avoid divergence, but it 
has been applied mostly to integer programs.
This paper addresses the 
problem of accelerating transition relations for unbounded arrays with the ultimate goal of avoiding 
divergence during reachability analysis of abstract programs.
For this, we first design a format to compute accelerations in this domain;
then we show how
to adapt the so-called `monotonic abstraction' technique to efficiently handle complex \formulae with 
nested quantifiers generated by the acceleration preprocessing. Notably, our technique can be easily 
plugged-in into
abstraction/refinement loops, and strongly contributes to avoid divergence: experiments 
conducted with the MCMT model checker attest the effectiveness of our approach on programs with 
unbounded arrays, where
acceleration and abstraction/refinement technologies fail if applied alone.
\end{abstract}

\section{Introduction}
Transitive closure is a logical construct that is far beyond first order
logic: either infinite disjunctions or higher order quantifiers or, at
least, fixpoints operators are required to express it. 
Indeed, due to the compactness of first order logic,
transitive closure (even modulo the axioms of a first order theory) is
first-order definable only in trivial cases.
These general results do not hold if we define a theory as a
class of structures $\mathcal{C}$ over a given signature\footnote{Such
definition is widely adopted in the SMT literature~\cite{SMT-LIB2}.}.
Such definition is different from the ``classical'' one where a theory
is identified as a set of axioms. By taking a theory as a class of
structures the property of compactness breaks, and
%
it might well happen that transitive closure 
becomes first-order definable (the first order definition being valid just inside the class
$\mathcal C$ - which is often reduced to a single structure). 

In this paper we consider the extension of Presburger arithmetic with
free unary function symbols.
%
Inside Presburger arithmetic, various classes of relations are known to have definable 
{\it acceleration}\footnote{`acceleration' is the name usually adopted
in the formal methods literature to indicate transitive closure.} (see
related work section below).
In our combined setting, the presence of free function symbols
introduces a 
novel 
feature that, for instance, limits decidability to
controlled extensions of the quantifier-free fragment~\cite{BradleyMS06,
GeM09}.
In this paper we show that in such theory some classes of relations
admit a definable acceleration.

The theoretical problem of studying the definability of accelerated
relations has an important application in program verification.
The theory we focus on is widely adopted to represent programs handling arrays, where free functions model arrays of integers. 
In this application domain, the accelerated counterpart of relations encoding systems evolution (e.g., loops in programs) allows to compute
`in one shot' the reachable set of states after an arbitrary but finite number of execution steps.
This has the great advantage of keeping under control sources of
(possible) divergence arising in the reachability analysis.
The contributions of the paper are many-fold.
First, we show that inside the combined theory of Presburger arithmetic augmented with free function symbols, 
the acceleration of some classes of relations -- corresponding, in our application domain, to relations involving arrays and counters --
can be expressed in first order language. This result comes at a price of allowing nested quantifiers.
Such nested quantification can be problematic in practical
applications. To address this complication,
as a second contribution of the paper, we show how to take care of
the quantifiers added by the accelerating procedure: the idea is
to import in this setting the so-called \emph{monotonic abstraction}
technique \cite{cav06, tacas06}.
Such technique has been reinterpreted and analyzed in a declarative
context in \cite{jsat}: from a logical point of view, it amounts to a
restricted form of \emph{instantiation for universal quantifiers}.
Third, we show that the ability to compute accelerated relations is greatly beneficial in
program verification. In particular, one of the biggest problems in verifying safety
properties of array programs is designing procedures for the synthesis of
relevant quantified predicates. In typical sequential programs (like those illustrated
in \figurename \ref{fig:motivating}), the guarded assignments used to model the program
instructions are ground and, as a consequence, the formulae representing backward
reachable states are ground too.
However, the invariants required to certify the safety of such programs contain quantifiers.
Our acceleration procedure is able to supply the required quantified predicates.
%
%
Our experimentation attests that abstraction/refinement-based
strategies widely used in verification benefit from accelerated transitions. In programs with nested
loops, as the {\tt allDiff} procedure of \figurename \ref{fig:motivating} for example, 
the ability to accelerate the inner loop simplifies the structure of
the problem, 
allowing abstraction to converge during verification of the entire program.
For such programs, abstraction/refinement 
or acceleration approaches taken in isolation are not sufficient,
reachability analysis converges only if they are combined together.

\noindent
{\it Related Work.}~
To the best of our knowledge, the only work addressing the problem of
accelerating relations involving arrays is \cite{BozgaHIKV09}. Such approach seems to be unable to
handle properties of common interest with more than one quantified variable (e.g., ``sortedness'') and
is limited to programs without nested loops. Our technique is not
affected by such limitations and can successfully handle examples
outside the scope of \cite{BozgaHIKV09}.

Inside Presburger arithmetic, various classes of relations are known to have definable 
acceleration:
these include 
relations that can be formalized as
difference bounds constraints~\cite{db1,db2}, octagons~\cite{octagons}
and finite monoid affine transformations~\cite{presburger} (paper~\cite{acceleration_CAV} presents a general approach covering all
these domains).
Acceleration for relations over Presburger arithmetic
has been also plugged into
abstraction/refinement loop for verifying integer programs~\cite{CaniartFLZ08, acceleration_ATVA}.


We recall that acceleration has also been applied 
proficiently in the
analysis of real time systems (e.g., \cite{HendriksL02,
BehrmannBDLPY02}), to compactly represent
the iterated
execution of cyclic actions (e.g., polling-based systems) and address fragmentation problems.

Our work can be proficiently combined with SMT-based
techniques for the verification of programs, as it helps
helps avoiding the reachability analysis divergence
when it comes to abstraction of programs with arrays of unknown length.
Since the technique mostly operates at pre-processing level (we add 
to the system
accelerated transitions
by collapsing branches of loops handling arrays), \emph{we believe that our technique is compatible with 
most approaches}
proposed in array-based software model checking.
We summarize some of these approaches below, without pretending of being exhaustive.

The vast majority of software model-checkers implement abstraction-refinement algorithms
(e.g., \cite{GrafS97, ClarkeGJLV00, BallR01
}).
{\it Lazy Abstraction with Interpolants} \cite{McMillan06} is one of the most effective frameworks for unbounded reachability analysis of programs.
It relies on the availability of interpolation procedures (nowadays efficiently embedded in SMT-Solvers \cite{CimattiGS10}) to generate new predicates
as (quantifier-free) interpolants for refining infeasible counterexamples.
%
\begin{figure}[t]
\begin{minipage}{0.44\textwidth}
\centering
{\scriptsize
$$
\begin{aligned}
&  {\sf function}~{\tt allDiff}~(~{\sf int}~{\tt a}[{\tt N}]~):~\\
& 1~~{\tt r}={\sf true}; \\
& 2~~{\sf for}~( {\tt i} = 1;~{\tt i} < {\tt N} \wedge {\tt r}; {\tt i}\text{+}\text{+} )~\\
& 3~~~~~{\sf for}~({\tt j} = {\tt i}\text{-}1; {\tt j} \geq 0 \wedge {\tt r}; {\tt j}\text{-}\text{-})~\\
& 4~~~~~~~{\sf if}~({\tt a}[{\tt i}] = {\tt a}[{\tt j}])~{\tt r} = {\sf false};\\
& 5~~{\sf assert}~\left( {\tt r} \rightarrow \left(
\begin{aligned}
& \forall x,y (0 \leq x < y < {\tt N}) \\
& ~~~~~\rightarrow ({\tt a}[x] \neq {\tt a}[y])
\end{aligned}
\right)
\right) \\
\end{aligned}
$$
}
\begin{center}
(a)
\end{center}
\end{minipage}\hfill
\begin{minipage}{0.55\textwidth}
\centering
{\scriptsize
$$
\begin{aligned}
&  {\sf function}~{\tt Reverse}~(~{\sf int}~{\tt I}[{\tt N}+1];~{\sf int}~{\tt O}[{\tt N}+1];~{\sf int}~{\tt c}~):\\
& 1~~{\tt c} = 0; \\
& 2~~{\sf while}~( {\tt c} \neq N+1 )~\{
 {\tt O}[{\tt c}] = {\tt I}[N-{\tt c}]; {\tt c\!+\!+};
\}\\
& 3~~{\sf assert}~\left(
\begin{aligned}
& \forall x\geq 0, y\geq 0\\
& ~~~ (x+y={\tt N}\to {\tt I}[x] = {\tt O}[y]~)~
\end{aligned}
\right) \\
\\
\\
\end{aligned}
$$
}
\begin{center}
(b)
\end{center}
\end{minipage}
\caption{\label{fig:motivating}Motivating examples.}
\end{figure}
%

%


For programs with arrays of unknown length the classical interpolation-based lazy abstraction works only 
if there is a support to handle quantified predicates
\cite{ABGRS12a} (the approach of~\cite{ABGRS12a} is the basis of our experiments below).
Effectiveness and performances of abstraction/refinement approaches
strongly depend on their ability in generating
the ``right'' predicates to stop divergence of verification procedures.
%
%
In case of programs with arrays, this quest can rely
on {\it ghost variables} \cite{FlanaganQ02} retrieved from the
post-conditions, on the backward propagation of post-conditions along
spurious counterexamples \cite{SeghirPW09} or can be constraint-based
\cite{SG09, BeyerHMR07}. Recently, constraint-based
techniques have been significantly extended to the generation of loop invariants
outside the array property fragment \cite{LarrazRR13}. This solution
exploits recent advantages in SMT-Solving, namely those devoted to
finding solutions of constraints over non-linear integer arithmetic
\cite{BorrallerasLORR12}.
Other ways to generate predicates are by means of {\it saturation-based}
theorem provers \cite{KV09,McM08} or interpolation procedures
\cite{JhalaM07, ABGRS12a}.
%

All the aforementioned techniques suffer from a certain degree of
randomness due to the fact that detecting the ``right'' predicate is an
undecidable problem.
%
%
For example, predicate abstraction approaches (i.e., \cite{ABGRS12a,ABGRS12b,SeghirPW09}) fail
verifying the procedures
in \figurename \ref{fig:motivating}, which  are commonly considered to be challenging for verifiers because they cause divergence\footnote{
The procedure {\tt Reverse} outputs to the array {\tt O} the reverse of the array {\tt I}; the procedure {\tt allDiff} checks whether the entries 
of the array {\tt a} are all different.
Many thanks to
Madhusudan Parthasarath and his group for pointing us to challenging problems with arrays of unknown length, including the {\tt allDiff} example.}.
Acceleration, on the other side, provides a precise and systematic way
for addressing the verification of programs. Its combination, as a
preprocessing procedure, with standard abstraction-refinement techniques
allows to successfully solve challenging problems like the ones in \figurename
\ref{fig:motivating}.

The paper is structured as follows: Section \ref{sec:preliminaries}
recalls the background notions about Presburger arithmetic 
and extensions.
In order to identify the classes of relations whose acceleration we want to study, we are guided by software model checking applications. To this end, 
we provide in Section~\ref{sec:guarded}  classification of the guarded assignments we
are
interested in.
Section
\ref{sec:backward} demonstrates the practical application of the
theoretical results. In particular, it presents a backward reachability procedure and
shows how to plug acceleration with monotonic abstraction in it. The
details of the theoretical results are presented later. 
The main definability result for accelerations is in
Section~\ref{sec:accelerating}, while
Section~\ref{sec:iterators} introduces the 
abstract notion
of an iterator. Section \ref{sec:experiments} discusses our
experiments and
Section \ref{sec:conclusion} 
concludes the paper.

\section{Preliminaries}\label{sec:preliminaries}
We work in Presburger arithmetic enriched with free function symbols and with definable function symbols (see below);
when we speak about validity or satisfiability of a formula, 
we mean 
\emph{satisfiability and validity in all structures having the standard structure of natural numbers as reduct}. 
Thus, satisfiability and validity are decidable if we limit to quantifier-free \formulae 
(by adapting
Nelson-Oppen combination results~\cite{NelOpp,TinHar}), 
but may become
undecidable otherwise (because of the presence of free function symbols).

We use $x,y,z, \dots$ or $i,j,k, \dots$ for variables; $t, u, \dots$
for terms, $c, d, \dots$ for free constants, $a, b, \dots$ for free function symbols, $\phi, \psi, \dots$ for \emph{quan\-tifier-free} \formulae.
Bold letters are used for tuples
and $\vert -\vert$ indicates tuples length;
 hence for instance $\tu$ indicates a tuple of terms like $u_1, \dots, u_m$, where ${m=\vert \tu\vert}$ (these tuples may contain repetitions). For variables, we use underline letters $\ux, \uy, \dots, \ui, \uj, \dots$ to indicates
tuples without repetitions. Vector notation can also be used for equalities: if $\tu=u_1, \dots, u_n$ and $\tv=v_1, \dots, v_n$, we may use $\tu =\tv$ to mean the formula
$\bigwedge_{i=1}^n u_i=v_i$.

If we write  $t(x_1, \dots, x_n), \tu(x_1, \dots, x_n), \phi(x_1, \dots , x_n)$ (or $t(\ux), \tu(\ux), \phi(\ux), \dots$,  in case  $\ux=x_1, \dots, x_n$), we mean that the term $t$, 
the tuple of terms $\tu$, the quantifier-free formula $\phi$ contain variables only from the tuple $x_1, \dots, x_n$.
Similarly, we may use $t(\ta, \tc, \ux), \phi(\ta, \tc, \ux), \dots$ to mean \emph{both} that the term $t$ or the quantifier-free formula $\phi$ have free variables included in $\ux$ \emph{and} that
the free function, free constants symbols occurring in them are among  $\ta,\tc$.
Notations like $t(\tu/\ux),\phi(\tu/\ux), \dots$ or $t(u_1/x_1, \dots, u_n/x_n),\phi(u_1/x_1, \dots, u_n/x_n), \dots$ - or occasionally just $t(\tu),\phi(\tu), \dots$ 
if confusion does not arise - are used for simultaneous substitutions within terms and \formulae.
For a given natural number $n$,  we use the standard abbreviations $\bar n$ and $n*y$ to denote the numeral of $n$ (i.e. the term $s^n(0)$, where $s$ is the successor function) and the sum of $n$ addends all equal to $y$, 
respectively.
If confusion does not arise, we may write just $n$ for $\bar n$.

By a \emph{definable function symbol}, we mean the following. Take a quantifier-free formula $\phi(\uj, y)$ such that $\forall \uj\exists ! y \phi(\uj, y)$ is valid  ($\exists !y$ stands for `there 
is a unique $y$ such that ...'). Then a definable function symbol $F$ (defined by $\phi$) is a fresh function symbol, matching the length of $\uj$ as arity, which is constrained 
to be interpreted in such a way that the formula
$
\forall y. F(\uj)= y \leftrightarrow \phi(\uj, y)
$
is true. The addition of definable function symbols does not
affect
decidability of quantifier-free \formulae and can be used for various purposes, for instance in order to express directly case-defined functions, array updates, etc. 
For instance, if $a$ is a unary free function symbol, the term 
$
wr(a,i,x)
$
(expressing the update of the array $a$ at position $i$ by over-writing $x$)
is a  definable function; formally, we have $\uj:=i,x,j$ and $\phi(\uj,y)$ is given by  $(j=i  \wedge y=x) \vee (j\neq i \wedge y=a(j))$. This formula  
$\phi(\uj, y)$ (and similar ones) can be  abbreviated like  
\vspace{-0.15cm}
$$
y~=~(\mathtt{if}~j=i~\mathtt{then}~x~\mathtt{else}~a(j))
\vspace{-0.15cm}
$$
to improve readability. Another useful definable function is integer division by a fixed natural number $n$:
to show that integer division by $n$ is definable, recall that in Presburger arithmetic we have that 
$\forall x~ \exists ! y~\bigvee_{r=0}^{n-1} (x= n*y+ 
r)$ is valid.

\section{Programs representation}\label{sec:guarded}
As a first step towards our main definability result, we provide
a classification of the relations we are interested in.
Such relations are guarded assignments required to model programs
handling arrays of unknown length.

In our framework a \emph{program} $\mathcal P$ is represented by a tuple $(\tv, l_I, l_E, T)$;
the tuple
$\tv:=\ta, \tc, pc$ models system variables; formally, we have that
\begin{description}
\item[-] the tuple $\ta=a_1, \dots, a_s$ contains free unary function symbols, i.e., the arrays manipulated by the program;
\item[-] the tuple $\tc=c_1, \dots, c_t$ contains free constants, i.e., the integer data manipulated by the program;
\item[-] the additional free constant $pc$ (called \emph{program counter}) is constrained to range over a finite set $L=\{l_1, ..., l_n\}$ of \emph{program locations} over which we distinguish the {\it initial} and {\it error} locations denoted by $l_I$ and $l_E$, respectively.
\end{description}
%
$T$ is a set of finitely many \formulae $\{\tau_1(\tv, \tv'),
\dots, \tau_r(\tv, \tv')\}$ called \emph{transition \formulae}
representing the program's body (here $\tv'$ are renamed copies of the $\tv$ representing the next-state variables).
$\mathcal P=(\tv, l_I, l_E, T)$ is \emph{safe} iff there is no satisfiable formula like
$$
(pc^0 = l_I) \wedge \tau_{i_1}(\tv^0, \tv^1)\wedge \cdots \wedge
\tau_{i_N}(\tv^{N-1}, \tv^N) \wedge (pc^N = l_E)
$$
where $\tv^0, \dots, \tv^N$ are renamed copies of the $\tv$ and each
$\tau_{i_h}$ belongs to $T$.

%
%
Sentences denoting sets of states reachable by $\mathcal{P}$ can be:
\begin{description}
\item[-] \emph{ground} sentences, i.e., sentences of the kind $\phi(\tc, \ta, pc)$;
\item[-] \emph{$\Sigma^0_1$-sentences}, i.e., sentences of the
form $\exists \ui.\; \phi(\ui, \ta, \tc, pc)$;
\item[-] \emph{$\Sigma^0_2$-sentences}, i.e., sentences of the form
$\exists \ui\, \forall \uj.\; \phi(\ui, \uj, \ta, \tc, pc)$.
\end{description}
We remark that in our context satisfiability can be fully decided only
for ground sentences and $\Sigma^0_1$-sentences (by Skolemization, as a
consequence of the general combination results \cite{NelOpp, TinHar}),
while only subclasses of $\Sigma^0_2$-sentences enjoy a decision
procedure \cite{BradleyMS06, GeM09}.
%
Transition \formulae can also be classified in three groups:
\begin{description}
\item[-] \emph{ground assignments}, i.e., transitions of the form
\begin{equation}\label{eq:transition_g}
pc=l~\wedge~ \phi_L(\tc, \ta) ~\wedge~ pc'=l' ~\wedge~
 \ta'=\lambda j.\; G(\tc, \ta, j) ~\wedge
 ~ \tc'= H(\tc, \ta)
\end{equation}
\item[-] \emph{$\Sigma^0_1$-assignments}, i.e., transitions of the form
 \begin{equation}
  \label{eq:transition_e}
 \exists \uk\;\left(
  \begin{split}
 pc=l ~\wedge~
  \phi_L(\tc,\ta, \uk) ~\wedge~ pc'=l' ~\wedge~~~~~
  \\
  \ta'=\lambda j.\; G(\tc,\ta, \uk, j) ~\wedge
  ~ \tc'= H(\tc,\ta, \uk)
 \end{split}
 \right)
  \end{equation}
 \item[-] \emph{$\Sigma^0_2$-assignments}, i.e., transitions of the form
 \begin{equation}
  \label{eq:transition_eu}
 \exists \uk\;\left( 
\begin{split}
 pc=l ~\wedge~
  \phi_L(\tc,\ta, \uk) ~\wedge~ \forall \uj~\psi_U(\tc,\ta, \uk, \uj)~\wedge~~~~~~
  \\
   pc'=l' ~\wedge~
  \ta'=\lambda j.\; G(\tc,\ta,\uk, j) 
  \wedge~ \tc'= H(\tc,\ta, \uk)
 \end{split}
 \right)
\end{equation}
\end{description}
\noindent
where $G=G_1, \dots, G_s$, $H=H_1, \dots, H_t$
are tuples of definable functions
(vectors of equations like
$\ta'=\lambda j.\; G(\tc, \ta, \uk j)$
can be replaced by the corresponding first order sentences 
$\forall j. ~\bigwedge_{h=1}^s a_h'(j)=G_h(\tc, \ta,\uk, j)$).

The \emph{composition} $\tau_1\circ \tau_2$ of two transitions $\tau_1(\tv, \tv')$ and $\tau_2(\tv, \tv')$ is 
expressed by the formula $\exists \tv_1 (\tau_1(\tv, \tv_1)\wedge \tau_2(\tv_1, \tv'))$ (notice that  composition  may result in an inconsistent 
formula, e.g., in case of location mismatch).
The \emph{preimage} $Pre(\tau, K)$ of the set of states
satisfying the formula $K(\tv)$ along the transition $\tau(\tv, \tv')$ is the set of states satisfying the formula $\exists \tv' (\tau(\tv, \tv')
\wedge K(\tv'))$. The following proposition is immediate by straightforward syntactic manipulations:

\begin{proposition}\label{prop:closure}
 Let $\tau, \tau_1, \tau_2$ be transition \formulae and let $K(\tv)$ be a formula. We have that:
{\rm (i)} if
$\tau_1, \tau_2, \tau, K$ are ground, then  $\tau_1\circ \tau_2$ is a ground assignment and $Pre(\tau, K)$ is a ground formula;
{\rm (ii)} if  $\tau_1, \tau_2, \tau, K$ are $\Sigma^0_1$,  then $\tau_1\circ \tau_2$ is a $\Sigma^0_1$-assignment and $Pre(\tau, K)$ is a $\Sigma^0_1$-sentence;
{\rm (iii)} if $\tau_1, \tau_2, \tau, K$ are  $\Sigma^0_2$, then  $\tau_1\circ \tau_2$ 
  is a $\Sigma^0_2$-assignment and $Pre(\tau, K)$ is a $\Sigma^0_2$-sentence.
\end{proposition}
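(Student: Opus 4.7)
The proof is a uniform syntactic calculation carried out across the three cases. The common recipe is: unfold the definition of composition (respectively, preimage), substitute the update portion of the transition formula so as to eliminate the intermediate (respectively, primed) copy of $\tv$, and then regroup the surviving quantifier prefixes into the canonical shape of the target class. The three items of the statement are then obtained essentially by the same calculation, differing only in the richness of the quantifier prefixes that survive.

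For composition, I start from $\tau_1\circ\tau_2=\exists\tv_1(\tau_1(\tv,\tv_1)\wedge\tau_2(\tv_1,\tv'))$ with $\tau_1,\tau_2$ written in their canonical forms \eqref{eq:transition_g}, \eqref{eq:transition_e}, \eqref{eq:transition_eu}. The update part of $\tau_1$ forces $\tc_1=H_1(\tc,\ta,\uk_1)$, $\ta_1(j)=G_1(\tc,\ta,\uk_1,j)$ and $pc_1=l_1'$, and these equations discharge $\exists\tv_1$ by substitution into $\tau_2$: occurrences of $c_{1,h}$ are rewritten as $H_{1,h}(\tc,\ta,\uk_1)$ and of $a_{1,h}(t)$ as $G_{1,h}(\tc,\ta,\uk_1,t)$, while the location guard $pc_1=l_2$ of $\tau_2$ forces $l_1'=l_2$ (otherwise $\tau_1\circ\tau_2$ is inconsistent and hence trivially in the target class). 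In case (i) nothing further is needed; in case (ii) the two existential blocks $\exists\uk_1,\exists\uk_2$ are merged into a single $\exists$ prefix; in case (iii) the two universal blocks $\forall\uj_1,\forall\uj_2$ are additionally merged into a single $\forall$ under the combined $\exists$, after renaming bound variables apart and using $\phi\wedge\forall\uj\,\psi\equiv\forall\uj(\phi\wedge\psi)$ whenever $\uj$ is not free in $\phi$. The preimage $Pre(\tau,K)=\exists\tv'(\tau(\tv,\tv')\wedge K(\tv'))$ is treated in the same style: the update portion of $\tau$ determines $\tv'$ and is substituted inside $K$ to kill $\exists\tv'$; the existential witnesses carried by $\tau$ and by $K$ coalesce into a single $\exists$ block, and in case (iii) the $\forall$ block carried by $\tau$ is merged with the $\forall$ block inside $K$.

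The one substantive check is that substituting $\ta'$ by $\lambda j.\,G(\tc,\ta,\uk,j)$ inside a quantifier-free formula yields another quantifier-free formula: this is exactly what the definition of a definable function symbol guarantees, since the defining formula of each $G_h$ is required to be quantifier-free, so the substitution is a purely term-level operation. Beyond this, the argument is routine bookkeeping on variable renaming and on the standard commutations of $\wedge$ with $\exists$ and $\forall$. I expect no difficulty in cases (i) and (ii); the only case deserving attention is (iii), where one must verify that the resulting quantifier alternation stays at depth two. This reduces to pulling the two $\exists$-blocks outside the conjunction and merging the two $\forall$-blocks inside, which is straightforward after an $\alpha$-renaming of bound variables.
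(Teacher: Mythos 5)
Your proposal is correct and is essentially the paper's own argument: the paper gives no explicit proof, declaring the proposition ``immediate by straightforward syntactic manipulations,'' and your write-up is precisely that manipulation carried out --- eliminating the intermediate (resp.\ primed) copy of $\tv$ by substituting the functional updates, merging the existential and universal blocks after $\alpha$-renaming, and noting that location mismatch only yields an inconsistent (hence trivially conforming) formula. You also correctly isolate the one point that deserves a remark, namely that substituting $\lambda j.\,G(\tc,\ta,\uk,j)$ for $\ta'$ preserves quantifier-freeness because the $G_h$ are definable function symbols with quantifier-free defining formulae (so that, in particular, the composed updates are again definable functions in the expanded signature).
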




\section{Backward search and acceleration}\label{sec:backward}
%
This section demonstrates the practical applicability of the
theoretical results of the paper in program verification. In particular, 
it presents the application of the accelerated transitions during
reachability analysis for guarded-assignments
representing programs handling arrays.
For readability, we first present a basic
reachability procedure. We subsequently analyze the divergence
problems and show how acceleration can be applied to solve them.
Acceleration application is not straightforward, though. The
presence of accelerated transitions might generate undesirable
$\Sigma^0_2$-sentences. The solution we propose is to over-approximate
such sentences by adopting a selective instantiation schema, known in
literature as {\it monotonic abstraction}. An enhanced reachability
procedure integrating acceleration and monotonic abstraction 
concludes the Section.

The
methodology we
exploit to check safety of a program $\mathcal P=(\tv, l_I, l_E, T)$ is \emph{backward search}: we successively
explore, through symbolic representation, all states leading to the
error location $l_E$
in one step, then in two steps,
in three steps, etc. until either we find a fixpoint or until we reach $l_I$.
To do this properly, it is
convenient to build a tree: the tree has arcs labeled by transitions and
nodes labeled by \formulae over $\tv$.
Leaves of the
tree might be marked `checked', `unchecked' or `covered'. The tree is
built according to the following non-deterministic rules.

\centerline{\textsc{Backward Search}}
\begin{description}
 \item \textsc{Initialization}: 
 a single node tree labeled by $pc=l_E$ and is marked `unchecked'. 
 \item \textsc{Check}: pick an unchecked leaf $L$ labeled with $K$.
 If $K \wedge pc=l_I$ is satisfiable (`safety test'), exit and return {\sf unsafe}. If it is not satisfiable, check whether there is
 a set $S$ of uncovered
 nodes such that (i) $L \not \in S$ and (ii) $K$ is inconsistent with the conjunction of the negations of the \formulae
 labeling the nodes in $S$ (`fixpoint check'). If it is so, mark $L$ as `covered' (by $S$). Otherwise, mark $L$ as `checked'.
  \item \textsc{Expansion}: pick a checked leaf $L$ labeled with $K$. For each transition $\tau_i \in T$, add a new leaf
  below $L$ labeled with $Pre(\tau_i, L)$ and marked as `unchecked'. The arc between $L$ and the new leaf is labeled with $\tau_i$.
  \item \textsc{Safety Exit}: if all leaves are covered, exit and return {\sf safe}.
\end{description}
The algorithm may not terminate (this is unavoidable 
by well-known undecidability results). Its correctness depends on the possibility of discharging safety tests with complete algorithms. 
By Proposition~\ref{prop:closure}, if transitions are 
ground- or $\Sigma^0_1$-assignments, completeness of safety tests
arising during the backward reachability procedure is guaranteed by the
fact that satisfiability of $\Sigma^0_1$-\formulae is decidable. 
For fixpoint tests, 
sound but incomplete algorithms may compromise termination, but not correctness of the answer; hence for fixpoint tests, we can 
adopt incomplete pragmatic algorithms (e.g. if in fixpoint tests we need to test satisfiability
of $\Sigma^0_2$-sentences, the obvious strategy is to Skolemize existentially quantified variables and to instantiate the universally quantified
ones over sets of terms chosen according to suitable heuristics). To sum up, we have:

\begin{proposition}\label{prop:correctedness}
 The above \textsc{Backward Search} procedure is partially correct for programs whose transitions are $\Sigma^0_1$-assignments, 
 i.e.,
 when the procedure terminates
 it gives a correct information about the safety of the input program.
\end{proposition}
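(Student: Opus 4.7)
The plan is to verify correctness at each of the two exits of the procedure. The unifying structural observation is Proposition~\ref{prop:closure}(ii): since the root is labeled with the $\Sigma^0_1$-sentence $pc=l_E$ and every arc applies $Pre$ along a $\Sigma^0_1$-assignment, every label $K$ produced by \textsc{Expansion} is again a $\Sigma^0_1$-sentence. Consequently the safety test ``$K\wedge pc=l_I$ satisfiable?'' in \textsc{Check} is decidable (Skolemize the existential prefix, then apply the Nelson--Oppen combination result for the quantifier-free fragment of Presburger arithmetic with free function symbols), so it always returns the correct answer.

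For the \textsf{unsafe} exit I would extract an error trace from the satisfying assignment. If $L$ passes the test and $\tau_{e_1},\dots,\tau_{e_N}$ are the arc labels along the root-to-$L$ path, then by \textsc{Expansion} $K_L = Pre(\tau_{e_N},\dots,Pre(\tau_{e_1},pc=l_E)\dots)$. Unfolding $Pre(\tau,K')(\tv)\equiv \exists \tv'(\tau(\tv,\tv')\wedge K'(\tv'))$ $N$ times, any model of $K_L\wedge pc=l_I$ produces tuples $\tv^0,\dots,\tv^N$ satisfying $(pc^0=l_I)\wedge \tau_{e_N}(\tv^0,\tv^1)\wedge\cdots\wedge\tau_{e_1}(\tv^{N-1},\tv^N)\wedge(pc^N=l_E)$, which is exactly the unsafe schema of Section~\ref{sec:guarded} after the renaming $i_h:=e_{N-h+1}$.

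For the \textsf{safe} exit I would exhibit an inductive backward invariant. Let $V$ be the set of internal (expanded) nodes of the final tree and define $I := \bigvee_{N\in V} K_N$. One then establishes: (a) $pc=l_E \models I$, since the root must be internal at termination (the only alternative cover would be the empty one, which would force $pc=l_E$ to be unsatisfiable); (b) every covered leaf $L$ satisfies $K_L\models I$, by well-founded induction in \emph{reverse} order of covering time: the cover $S_L$ only uses nodes that were uncovered when $L$ was checked, so in the final tree each such node is either internal (hence contributes a disjunct to $I$) or a leaf covered strictly later than $L$, for which the induction hypothesis gives $K_N\models I$; in both cases $K_L\models\bigvee_{N\in S_L}K_N\models I$; (c) $I$ is inductive under $Pre$, i.e.\ $Pre(\tau,I)\models I$ for every $\tau\in T$, because for each internal $N$ the child labeled $Pre(\tau,K_N)$ is either itself internal or a covered leaf, both subsumed by $I$ via (b). Since every $N\in V$ has passed the safety test, $K_N\wedge pc=l_I$ is unsatisfiable; combining this with (a)--(c) shows that the backward-reachable set from $pc=l_E$ is contained in $I$ and disjoint from $\{pc=l_I\}$, so the program is safe.

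The main obstacle is the well-founded induction in step (b): the procedure allows the cover $S_L$ to consist of \emph{arbitrary} nodes uncovered at the time of covering, not only ancestors of $L$, so a naive induction on tree depth or creation time does not suffice. Ordering the covered leaves by the moment at which they were marked covered and performing the induction backwards (from the last to the first covering event) bypasses the subtlety, because at the moment $L$ is covered any earlier-covered leaf is already forbidden from appearing in $S_L$ --- only strictly-later-covered leaves or never-covered (hence eventually internal) nodes remain, which is exactly what closes the induction.
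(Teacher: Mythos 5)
Your proof is correct, and it follows the same line the paper takes: the paper's entire justification is that all labels stay $\Sigma^0_1$ (Proposition~\ref{prop:closure}), hence the safety tests are decidable and precise, while soundness (not completeness) of the fixpoint tests suffices for correctness of a \textsf{safe} answer. You go further than the paper, which leaves the \textsf{safe}-exit argument implicit: your construction of the inductive backward invariant $I=\bigvee_{N\in V}K_N$ and, in particular, the reverse-chronological induction that rules out circular coverings (possible because covering sets may contain non-ancestors, but must consist of nodes uncovered at covering time) is exactly the missing detail, and it is handled correctly.
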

%
Divergence phenomena  are usually not due to incomplete algorithms for fixpoint tests 
(in fact, divergence persists even in cases where fixpoint tests are precise).
%
%
\begin{example}\label{ex:reverse}
{\small
Consider a running example in \figurename~\ref{fig:motivating}(b): it reverses
the content of the array ${\tt I}$ into ${\tt O}$. In our formalism, it is
represented by the following transitions\footnote{For readability, we omit identical
updates like $I'=I$, etc. Notice that we have $l_I=1$ and $l_E=4$.}:
 $$
\begin{aligned}
& \tau_1 \equiv ~{\tt pc} = 1 \wedge {\tt pc}' = 2 \wedge {\tt c}' = 0 \\
& \tau_2 \equiv ~{\tt pc} = 2 \wedge {\tt c} \neq N+1  \wedge {\tt c}' = {\tt c} + 1 \wedge O' =  wr(O, {\tt c}, I(N-{\tt c})) \\
& \tau_3 \equiv ~{\tt pc} = 2 \wedge {\tt c} = N+1 \wedge {\tt pc}' = 3 \\
& \tau_4 
\equiv ~{\tt pc} = 3 
\wedge 
       \exists z_1\geq 0, z_2\geq 0~ (z_1+z_2=N\wedge I(z_1)\neq O(z_2)~)
             \wedge {\tt pc}' = 4.
\end{aligned}
$$
 Notice that $\tau_1-\tau_3$ all are ground assignments; only $\tau_4$ (that translates the error condition) is a $\Sigma^0_1$-assignment.
 If we apply our tree generation procedure, we get an infinite branch, whose nodes - after routine simplifications - are labeled as follows 
  $$
\begin{aligned}
& \cdots \\
(K_i)& ~~{\tt pc} = 2  \wedge \exists z_1, z_2\;\psi(z_1, z_2)\wedge {\tt c} = N-i \wedge z_2\neq N \wedge\cdots\wedge z_2\neq N-i\\
& \cdots
\end{aligned}
$$
}
where $\psi(z_1, z_2)$ stands
for $z_1 \geq 0 \wedge z_2 \geq 0 \wedge z_1+z_2=N\wedge I(z_1)\neq O(z_2)$. 
\qed
\end{example}
\COMMENT{
\begin{figure}[t]
{\scriptsize
$$
\begin{aligned}
&  ~~~~{\sf function}~{\tt Reverse}~{\tt
~int~I[N+1];~int~O[N+1];~int~c;~
} \\
& 1~~~~~~{\tt c} = 0; \\
& 2~~~~~~{\tt while}~( {\tt c} \neq N+1 )~\{O[{\tt c}] = I[N-{\tt c}];~{\tt c\!+\!+};\}  \\
& 3~~~~~~{\tt if}~(~\exists z_1\geq 0, z_2\geq 0~ (z_1+z_2=N\wedge I[z_1]\neq O[z_2]~)~) \\
& 4~~~~~~~~~{\tt ERROR;}
\end{aligned}
$$
}
\caption{Pseudo-code for function ${\tt Reverse}$. 
}
\label{fig:reverse}
\end{figure}
}
As demonstrated by the above example, a divergence source comes from the fact that we are unable to represent \emph{in one shot} the effect of 
executing   
finitely many times a given sequence of transitions.
Acceleration
can solve this problem.

\begin{definition}
The $n$-th composition of a transition $\tau(\tv, \tv')$ with itself is
recursively defined by $\tau^1:= \tau$ and $\tau^{n+1}:= \tau\circ
\tau^n$. The \emph{acceleration} $\tau^+$ of $\tau$ is $\bigvee_{n\geq
1} \tau^n$.
\end{definition}
In general, acceleration requires a logic supporting infinite disjunctions.
Notable exceptions are witnessed by Theorem~\ref{thm:tau+}. For now we
focus on examples where
accelerations yield $\Sigma^0_2$-assignments starting from ground
assignments.

\begin{example}
{\small
Recall transition $\tau_2$ from the running example.
$$
\tau_2 \equiv ~{\tt pc} = 2 \wedge {\tt c} \neq N+1 \wedge pc' =2 \wedge {\tt c}' = {\tt c} + 1 \wedge I' =I \wedge O' =  wr(O, {\tt c}, I(N-{\tt c}))
$$
(here we displayed identical updates for completeness).
Notice that the variable ${\tt pc}$ is left unchanged in this transition (this is essential, otherwise the acceleration
gives an inconsistent transition that can never fire). If we accelerate it,
we get the $\Sigma^0_2$-assignment\footnote{This $\Sigma^0_2$-assignment 
can be automatically computed using procedures outlined in the proof of
Theorem~\ref{thm:tau+}.}
\vspace{-0.2cm}
\begin{equation}\label{eq:acc_reverse}
\exists n>0~\left( 
\begin{split}
{\tt pc} = 2~ \wedge~ \forall j~({\tt c}\leq j < {\tt c}+n \to j \neq N+1)~\wedge~{\tt c}'= {\tt c}+n~\wedge ~~~~
\\
\wedge
~{\tt pc}' = 2~\wedge
~O' = \lambda j~({\tt if} ~{\tt c}\leq j < {\tt c}+ n~{\tt then}~
I(N\!-\!j)~{\tt else}~O(j))
\end{split}
\right)
\end{equation}
\vspace{-0.4cm}
\qed
}
\end{example}
In presence of
these accelerated $\Sigma^0_2$-assignments, \textsc{Backward Search} can produce
problematic $\Sigma^0_2$-sentences (see
Proposition~\ref{prop:closure} above) which cannot be handled
precisely by existing solvers.
As a solution to this problem we propose applying to such sentences a
suitable abstraction, namely {\it monotonic abstraction}.
\vspace{-0.4cm}
\begin{definition}\label{def:ma}
 Let 
$\psi:\equiv \exists \ui\, \forall \uj.\; \phi(\ui, \uj, \ta, \tc, pc)$
be a $\Sigma^0_2$-sentences and let $\mathcal S$ be a finite set   of terms of the kind $t(\ui, \tv)$. The \emph{monotonic $\mathcal S$-approximation 
of $\psi$} is the $\Sigma^0_1$-sentence
\begin{equation}\label{eq:acc}
\exists \ui\; \bigwedge_{\sigma: \uj\to \mathcal S}\phi(\ui, \uj\sigma/\uj, \ta, \tc, pc)
\end{equation}
(here $\uj\sigma$, if $\uj=j_1, \dots, j_n$, is the tuple of terms $\sigma(j_1), \dots, \sigma(j_n)$).
\end{definition}
By Definition \ref{def:ma}, 
universally quantified variables are \emph{eliminated through instantiation};
the larger the set $\mathcal S$ is, the better approximation you get. In
practice, the natural choices for $\mathcal S$ are $\ui$ or the set of terms of the kind $t(\ui, \tv)$ occurring in $\psi$
(we adopted the former choice in our implementation).
As a result of replacing $\Sigma^0_2$-sentences by their monotonic
approximation, spurious unsafe traces might occur.
However, 
\textbf{those can be disregarded if accelerated transitions contribute
to their generation}.
This is because if $\mathcal P$ is unsafe, then unsafety can be discovered without appealing to accelerated transitions.

To integrate monotonic abstraction, the above \textsc{Backward Search} procedure is modified as follows.
In a \textsc{Preprocessing} step, we add
some accelerated transitions of the kind
$(\tau_1\circ \cdots \circ \tau_n)^+$ to $T$.
These transitions can be found by
inspecting cycles in the control flow graph of the program and
accelerating them following the procedure described in Sections 5, 6.
The natural cycles to inspect are those corresponding to loop branches in the source code. It should be noticed, however,
that identifying the good cycles to accelerate is subject to specific heuristics that deserve separate investigation in
case the program has infinitely many cycles. (choosing cycles from branches of innermost loops 
is the simplest example of such heuristics and the one
we implemented).

After this extra preprocessing step, the remaining instructions are left unchanged, with the exception of \textsc{Check} that is modified as follows:
\begin{description}
 \item \textsc{Check}': pick an unchecked leaf $L$
labeled by a formula $K$. If $K$ is a $\Sigma^0_2$-sentence,
choose a suitable $\mathcal S$ and 
  replace $K$ by its monotonic $\mathcal S$-abstraction $K'$.
 If $K'\wedge pc=l_I$ is inconsistent, mark $L$ as `covered' or `checked' according to the outcome of the fixpoint check, as was done in the original {\sc Check}.
 If $K'\wedge pc=l_I$ is satisfiable, analyze the path from the root to $L$. If 
 no accelerated transition $\tau^+$ is found in it  return {\sf unsafe}, otherwise
%
remove the sub-tree $D$ from the target of 
$\tau^+$ 
to the leaves. Each node $N$ covered by a node in $D$ will be flagged as
 `unchecked'
 (to make it eligible in future for the \textsc{Expansion} instruction).
 
\end{description}
The new procedure will be referred as {\sc Backward Search}'.
It is quite straightforward to see that
Proposition~\ref{prop:correctedness} still applies to the modified 
algorithm.
Notice that, although termination cannot be ensured (given well-known undecidability results), spurious traces containing approximated 
accelerated transitions cannot be produced again and again: when the sub-tree $D$ from the target node $v$ of 
$\tau^+$ is removed by \textsc{Check}', the node $v$ is not a leaf (the arcs labeled by the transitions $\tau$ are still there),  hence it cannot be expanded
anymore according to the \textsc{Expansion} instruction.

\begin{example}
{\small
Let again consider our running example and demonstrate how acceleration
and monotonic abstraction work. In the preprocessing step, we add the accelerated transition $\tau_2^+$ given by~\eqref{eq:acc_reverse}
to the transitions
we already have. After having  computed $(K')\equiv Pre(\tau_4, K), (K'')\equiv Pre(\tau_3, K')$, 
 we compute $(\tilde K)\equiv Pre(\tau_2^+, K'')$ and get
\vspace{-0.05cm}
\begin{equation*}
\exists n>0\,\exists z_1, z_2\,\left( 
\begin{split}
{\tt pc} = 2~ \wedge~ \forall j~({\tt c}\leq j < {\tt c}\!+\!n \to j \neq N\!+\!1)~\wedge~~~~~~~~
\\
\wedge~ {\tt c}\!+\!n= N\!+\!1~\wedge z_1 \geq 0 ~\wedge~ z_2\geq 0 ~\wedge z_1+z_2=N~\wedge~
\\
\wedge
~I(z_1) \neq \lambda j~({\tt if} ~{\tt c}\leq j < {\tt c}+ n~{\tt then}~
I(N\!-\!j)~{\tt else}~O(j))(z_2)
\end{split}
\right)
\end{equation*}
We approximate using the set of terms $\mathcal S=\{z_1, z_2, n\}$. 
After
simplifications we get
\vspace{-0.1cm}
\begin{equation*}
 \exists z_1, z_2\, ({\tt pc} = 2~\wedge ~ {\tt c}\leq N
~\wedge z_1 \geq 0 ~\wedge~ z_2\geq 0 ~\wedge~
 z_1+z_2=N~\wedge~O(z_2)\neq I(z_1)~\wedge 
~{\tt c}>z_2)
\vspace{-0.1cm}
\end{equation*}
Generating this formula is enough to stop divergence.
\qed
}
\end{example}

Notice that in the computations of the above example we eventually succeeded in eliminating the extra quantifier
$\exists n$ introduced by the accelerated transition. This is not always possible:
sometimes in fact, to get the good invariant one needs more quantified variables than those occurring 
in the annotated program and accelerated transitions might be the way of getting 
such additional quantified variables.
As an example of this phenomenon, consider 
the {\tt init+test} program
included in our benchmark suite of Section~\ref{sec:experiments} below.

\section{Iterators}\label{sec:iterators}
This Section introduces \emph{iterators} and \emph{selectors}, two
main ingredients used to
supply a useful format to compute accelerated transitions.
Iterators are meant to formalize the notion of a counter scanning the indexes of an array: the most simple iterators are increments and decrements, but one may also build
more complex ones for different scans, like in binary search. We give
their formal definition and then we
supply
some examples. We need to handle tuples of terms because we want to consider the case 
in which we deal with different arrays with possibly different scanning variables.
Given a $m$-tuple of terms 
\begin{equation}\label{eq:terms}
\tu(\ux)~:=~u_1(x_1, \dots, x_m), \dots, u_m(x_1, \dots, x_m)
\end{equation}
containing the $m$ variables $\ux=x_1, \dots, x_m$,
we indicate with $\tu^n$ the term expressing the $n$-times composition of (the function denoted by) $\tu$ with itself.
Formally, we have $\tu^0(\ux):=\ux$ and
$$
\tu^{n+1}(\ux)~:=~u_1(\tu^n(\ux)), \dots, u_m(\tu^n(\ux)) ~.
$$

\begin{definition}
A tuple of terms $\tu$ like~\eqref{eq:terms}
is said to be an  \emph{iterator} iff there exists an $m$-tuple of $m+1$-ary terms
$
\tu^*(\ux,y)~:=~u^*_1(x_1, \dots, x_m,y), \dots, u^*_m(x_1, \dots, x_m,y)
$
such that for
any
natural number $n$ it happens that the formula
\begin{equation}\label{eq:iterators}
\tu^n(\ux)= \tu^*(\ux, \bar n)
\end{equation}
is valid.\footnote{Recall 
that $\bar n$ is the numeral of $n$, i.e. it is $s^n(0)$.}
Given an iterator $\tu$ as above, we say that
an $m$-ary term $\kappa(x_1, \dots, x_m)$
is a \emph{selector} for $\tu$ iff
there is an $m+1$-ary term $\iota(x_1, \dots, x_m, y)$ 
yielding the validity of 
the formula
\begin{equation}\label{eq:iter}
 z= \kappa(\tu^*(\ux,y))
 \to 
 y=\iota(\ux, z)
~~.
 \end{equation}
\end{definition}

The meaning of condition~\eqref{eq:iter} is that, once the input $\ux$
and the selected output $z$ are known, it is possible to identify
uniquely (through $\iota$) the number of iterations $y$ that are needed to get $z$ by applying $\kappa$ to $\tu^*(\ux,y)$. 
The term $\kappa$ is a selector function that selects (and possibly
modifies) one of the $\tu$;
in most applications (though not always) $\kappa$ is a projection, represented as a variable
$x_i$ (for $1\leq i \leq m$), so that $\kappa(\tu^*(\ux,y))$ is just the $i$-th component $u^*_i(\ux,y)$ of the tuple of terms $\tu^*(\ux, y)$.
In these cases, the formula~\eqref{eq:iter}
reads as 
\begin{equation}\label{eq:iter1}
z= u^*_i(\ux,y) 
\to 
y=\iota(\ux, z)~.
\end{equation}

\begin{example}\label{ex:standard}
{\small
The canonical example is when we have 
$m=1$ and $\tu:=u_1(x_1):=x_1+1$; this is an iterator with $u_1^*(x_1,y):=x_1+y$; as a selector, we can take 
$\kappa(x_1):=x_1$ and $\iota(x_1,z):=z-x_1$. 
}
\qed
\end{example}

\begin{example}\label{ex:modulo}
{\small
The previous example can be modified, by choosing  $\tu$ to be $x_1+\bar n$, for some 
integer $n\neq 0$: then we have $u^*(x_1,y):= x_1+ n*y$, $\kappa(x_1):= x_1$, and $\iota(x_1,z)= (z-x_1)//n$
where // is integer division (recall that integer division by a given $n$
is definable in Presburger arithmetic).
}
\qed
\end{example}

\begin{example}
{\small
If we move to more expressive arithmetic theories, like Primitive Recursive Arithmetic (where we have a symbol for every
primitive recursive function), we can get much more  examples.
As an example with $m>1$, we can take 
$\tu:=x_1+x_2, x_2$ and get $u_1^*(x_1,x_2,y)=x_1+y*x_2$, $u_2^*(x_1,x_2,y)=x_2$. Here a selector is for instance $\kappa_1(x_1, x_2):=\bar 7+x_1$,
$\iota(x_1, x_2,z):= (z-x_1-\bar 7)//x_2$.
}
\qed
\end{example}

\section{Accelerating local ground assignments}\label{sec:accelerating}
Back to our program $\mathcal P=(\tv, l_I, l_E, T)$,
we look for 
conditions on transitions from $T$ allowing to accelerate them 
via a $\Sigma^0_2$-assignment. 
Given an iterator $\tu(\ux)$, a \emph{selector assignment} for $\ta:=a_1, \dots, a_s$ (relative to $\tu$) is a tuple
of selectors $\kappa:=\kappa_{1}, \dots, \kappa_{s}$ for $\tu$.
Intuitively,
the components  of the tuple are meant to 
indicate the scanners of the arrays $\ta$
and 
as such 
might not be distinct (although, of course, just \emph{one} selector is assigned to each array).
A formula $\psi$ (resp. a term $t$) is said to be \emph{purely arithmetical} over a finite set of terms $V$ 
iff it is obtained from a formula (resp. a term)  \emph{not containing the extra free function symbols $\ta, \tc$}
by replacing some free variables in it by terms from $V$.
Let $\tv=v_1, \dots, v_s$ and $\tw=w_1, \dots, w_s$ be $s$-tuples of terms; below 
$wr(\ta, \tv, \tw)$ and $\ta(\tv)$ indicate the tuples $wr(a_1, v_1, w_1), \dots, wr(a_s, v_s, w_s)$ and $a_1(v_1),\dots, a_s(v_s)$,
respectively (recall  from Section~\ref{sec:guarded} that $s=\vert \ta\vert$).

\begin{definition}\label{def:localground}
 A \emph{local ground assignment} is a ground assignment of the form 
 \begin{equation}\label{eq:transition_sg}
 pc=l~\wedge~ \phi_L(\tc, \ta) ~\wedge~ pc'=l ~\wedge~
  \ta'=wr(\ta,\kappa(\tilde\tc), \ttb(\tc, \ta)) ~\wedge
  ~ \tilde\tc'= \tu(\tilde\tc) ~\wedge \td'=\td
 \end{equation}
 where 
{\rm (i)} $\tc=\tilde \tc, \td$;
{\rm (ii)}
$\tu 
=u_1, \dots,u_{\vert \tilde \tc\vert}$ is an iterator; 
{\rm (iii)} 
 the terms $\kappa$ are a selector assignment 
  for $\ta$ relative to $\tu$;
{\rm (iv)}
the formula $\phi_L(\tc, \ta)$ and the terms  $\ttb(\tc, \ta)$ are purely arithmetical over the set of terms $\{\tc, \ta(\kappa(\tilde\tc))\}
\cup \{a_i(d_j)\}_{1\leq i\leq s, 1\leq j\leq \vert \td\vert}$; {\rm (v)} the guard $\phi_L$ contains the conjuncts
$\kappa_i(\tilde \tc)\neq d_j$, for $1\leq i\leq 
s$
and $1\leq j\leq \vert \td\vert$. 
\end{definition}

Thus in a local ground assignment, there are various restrictions: (a) the numerical variables are split into `idle' variables $\td$
and variables $\tilde \tc$ subject to update via an iterator $\tu$; 
(b) the program counter is not modified; (c) the guard does not depend on the values of the $a_i$ at 
cells different from   $\kappa_i(\tilde\tc), \td$; (d) the update of the $\ta$ 
are simultaneous writing operations modifying only the  entries $\kappa(\tilde\tc)$.
Thus, the assignment is local
and the relevant modifications it makes are determined by the selectors locations.
 The
`idle' variables $\td$ are useful to accelerate branches of nested loops; 
the inequalities mentioned in (v) are automatically generated by making case distinctions in assignment guards.

\begin{example}
{\small
For our running example,
we show that transition $\tau_2$ (the one we want to accelerate) is a local ground assignment.
We have $\td=\emptyset$ and $\tilde\tc={\mathtt c}$ and $ \ta=I,O$.
The 
 counter ${\tt c}$  is incremented by 1 at each application of $\tau_2$. Thus, our iterator is $\tu:= x_1+1$ and the selector assignment assigns $\kappa_1:=N-x_1$ to $I$ and $\kappa_2:=x_1$ to $O$. In this way, $I$ is modified
 (identically) at $N-{\tt c}$ via $I'=wr(I,N-{\tt c}, I(N-{\tt c}))$ and $O$ is modified at ${\tt c}$ via $O'=wr(O, {\tt c}, I(N-{\tt c }))$. The guard $\tau_2$ is  ${\tt c}\neq  N+1$.
Since the formula ${\tt c}\neq  N+1$ and the term $I(N-{\tt c })$ are purely arithmetical over  $\lbrace {\tt c}, I(N-{\tt c }), O({\tt c})\rbrace$, we conclude that  $\tau_2$ is local. 
}
\qed
\end{example}

\begin{theorem}\label{thm:tau+}
If $\tau$ is a local ground assignment, then $\tau^+$ is a
$\Sigma^0_2$-assignment.
\end{theorem}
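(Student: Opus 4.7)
The plan is to compute $\tau^n$ explicitly by induction on $n$, recognize that the resulting formula has a shape uniform in $n$, and then existentially quantify $n$ to form $\tau^+$. First, using clause~(iv) of Definition~\ref{def:localground}, I would rewrite the guard $\phi_L(\tc, \ta)$ and the write-values $\ttb(\tc, \ta)$ in their purely arithmetical form, i.e.\ as terms in $\tc$, $\td$, the cells $\ta(\kappa(\tilde\tc))$, and the idle cells $a_i(d_j)$; call these $\tilde\phi_L$ and $\tilde\ttb$.

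Next I would track what happens after $n$ applications. By the iterator property~\eqref{eq:iterators}, $\tilde\tc^{(n)} = \tu^*(\tilde\tc, \bar n)$, while $\td$ and $pc$ stay unchanged. The key observation is that the selector condition~\eqref{eq:iter} forces the cells $\kappa_i(\tu^*(\tilde\tc, k))$ written at different iterations $k \in \{0, \dots, n-1\}$ to be pairwise distinct: if two coincided at some value $z$, then $\iota_i(\tilde\tc, z)$ would have to equal two different indices. Combined with clause~(v), which guarantees $\kappa_i(\tilde\tc^{(k)}) \neq d_j$ at every step, this shows that no cell being read at step $k$ has been overwritten earlier; therefore every read still returns its initial value from $\ta$. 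Consequently the arithmetical forms $\tilde\phi_L$ and $\tilde\ttb$ evaluated at iteration $k$ depend only on $\tu^*(\tilde\tc, k)$, $\td$, the initial $\ta$, and $k$.

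From here the shape of $\tau^n$ is transparent. Its guard becomes $\forall k.\,(0 \leq k < \bar n) \to \tilde\phi_L^{(k)}$; the counter update is $\tilde\tc' = \tu^*(\tilde\tc, \bar n)$, $\td' = \td$, $pc' = l$; and for each array $a_i$ the final value at cell $j$ is obtained by a case-split using the selector inverse $\iota_i$: namely $a'_i(j)$ is $\tilde b_i$ evaluated at iteration $\iota_i(\tilde\tc, j)$ whenever $0 \leq \iota_i(\tilde\tc, j) < \bar n$ and $\kappa_i(\tu^*(\tilde\tc, \iota_i(\tilde\tc, j))) = j$, and $a_i(j)$ otherwise. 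This is a definable function through the if-then-else mechanism of Section~\ref{sec:preliminaries}, so the whole $\lambda j$-update lives within the fragment allowed by~\eqref{eq:transition_eu}.

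The main obstacle is precisely this array update: I need to express ``$a'_i(j)$ takes the value written at the (unique) iteration that touched cell $j$, if any'' without introducing an extra quantifier over the iteration index. The selector axiom is exactly what allows me to replace the hidden ``$\exists k$'' by the \emph{definable} term $\iota_i(\tilde\tc, j)$, turning an a~priori $\Sigma^0_2$-shaped update into an ordinary case-split. Once this is in place, existentially quantifying $n > 0$ produces a formula consisting of a single outer $\exists$ followed by a single inner $\forall$ on the iteration counter, which matches the template~\eqref{eq:transition_eu} and confirms that $\tau^+$ is a $\Sigma^0_2$-assignment.
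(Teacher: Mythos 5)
Your proposal is correct and follows essentially the same route as the paper's proof: an induction on $n$ giving a uniform closed form for $\tau^n$ (bounded guard conjunction, counter update via $\tu^*$, array update as a definable case-split keyed on the selector inverse $\iota_i$), with the selector injectivity guaranteeing that cells written at distinct iterations are distinct so that reads see initial values, and a final existential quantification over $n$ justified by working in the standard model. The only differences are presentational (the paper writes $\tau^n$ with a finite conjunction and introduces the bounded $\forall$ only when passing to $\tau^+$, and it carries out the inductive verification of the guard/counter/array components in detail), so no gap.
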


\begin{proof} (Sketch, see Appendix~\ref{app:proofs} for full details).
Let us fix the local ground assignment~\eqref{eq:transition_sg}; let $\ta[\td]$ indicate the $s*\vert\td\vert$-tuple of terms 
$\{a_i(d_j)\}_{1\leq i\leq s, 1\leq j\leq \vert \td\vert}$;
since $\phi_L$ and $\ttb
:= t_1, \dots, t_s$ are purely arithmetical over  $\{\tilde\tc,\td, \ta(\kappa(\tilde\tc)),\ta[\td] \}$, we have that they can be written as 
$\tilde\phi_L(\tilde\tc,\td, \ta(\kappa(\tilde\tc)),\ta[\td])$, $\tilde \ttb(\tilde\tc,\td ,\ta(\kappa(\tilde\tc)),\ta[\td])$, 
respectively, where $\tilde \phi_L, \tilde \ttb$ do not contain 
occurrences of the free function and constant symbols $\ta, \tc$.
The transition $\tau^+$ can be expressed as a $\Sigma^0_2$-assignment by
\begin{equation*}
  \exists y>0
  \left(
   \begin{split}
  \forall z~ (0\leq\! z\! < y \!\to\! \tilde\phi_L(\tu^*(\tilde\tc,z),\td, \ta(\kappa(\tu^*(\tc,z))), \ta[\td])
  \wedge  \td'=\td\,\wedge
  \\
  \wedge~pc=l ~\wedge~pc'=l~
  \wedge~ \tilde\tc'=\tu^*(\tilde\tc, y)~\wedge~\ta'= \lambda j.\; F(\tc, \ta,y, j)
  \end{split}
  \right)
\end{equation*}
where the tuple $F=F_1, \dots, F_s$ of definable functions is given by
\begin{equation*}
\begin{split}
 F_h(\tc, \ta,y, j)~=~\lambda j.~~ \mathtt{if}~~ 0 \leq\iota_{h}(\tilde\tc,j) < y~ 
 \wedge j= \kappa_h(\tu^*(\tc,\iota_{h}(\tilde\tc,j)))
 ~\mathtt{then}\;
 \\   
\tilde t_{h}(\tu^*(\tilde\tc, \iota_{h}(\tilde\tc,j) ),\td, \ta(\kappa(\tu^*(\tilde\tc,\iota_{h}(\tilde\tc,j)))),\ta[\td])\;\mathtt{else}\;a_h[j]
 \end{split}
\end{equation*}
for $h=1,\dots, s$ (here $\iota_1, \dots, \iota_s$ are the terms corresponding to $\kappa_1, \dots,
\kappa_s$ according to the definition of a selector for the iterator $\tu$).
$\hfill\dashv$
\end{proof}

We point out that the effective use of Theorem~\ref{thm:tau+}  relies on the implementation of a repository of 
iterators and selectors and of
algorithms recognizing them. The larger the repository is, the more possibilities the model checker has to exploit the full power of acceleration.

In most applications it is sufficient to 
consider  accelerated transitions of the canonical form of Example~\ref{ex:standard}.
Let us examine in details this special case; here
 $\tc$ is a single counter ${\tt c}$ that is incremented by one (otherwise said, the iterator is $x_1+1$) and
 the selector assignment is trivial, namely it is just $x_1$. We call
these local ground assignments \emph{simple}.
Thus, a simple local ground assignment has the form 
\begin{equation}\label{eq:simplelocal}
 pc=l~\wedge~ \phi_L({\tt c}, \ta) ~\wedge~ pc'=l \wedge {\tt c}'={\tt c}+1~\wedge~
  \ta'=wr(\ta, {\tt c}, \ttb({\tt c},\ta)) 
 \end{equation}
where 
the first occurrence of ${\tt c}$  in $wr(\ta, {\tt c}, \ttb({\tt c},\ta))$ stands in fact for
an  $s$-tuple of terms all identical to ${\tt c}$, and where $\phi_L, \ttb$  are purely arithmetical over the terms ${\tt c}$, $a_1[{\tt c}], \dots, a_s[{\tt c}]$.
The accelerated transition  computed in the proof of Theorem~\ref{thm:tau+} for~\eqref{eq:simplelocal}
can be rewritten as follows:
\begin{equation}
  \label{eq:transition_+}
 \exists k\left( 
\begin{split}
k> 0~\wedge~
pc=l ~\wedge~
  \forall j~({\tt c}\leq j<{\tt c}+k \to\phi_L(j, \ta))~\wedge~ pc'=l ~\wedge~ 
  \\
\wedge~ {\tt c}'={\tt c}+k~\wedge~
  \ta'=\lambda j.\;(\mathtt{if}~{\tt c}\leq j<{\tt c}+k~\mathtt{then} ~\ttb(j,\ta)~\mathtt{else}~\ta[j]) 
 \end{split}
 \right)
\end{equation}
A slight extension of the notion of a simple assignment leads to
a 
further
subclass of
local ground assignments useful to accelerated branches of nested loops (see Appendix~\ref{sec:all_diff_appendix} for more details).

\section{Experimental evaluation}\label{sec:experiments}
We implemented the algorithm described in Section
\ref{sec:backward} - \ref{sec:accelerating} as a preprocessing module inside the {\sc mcmt}
model checker \cite{GhilardiR10b}.
To perform a feasibility study, we intentionally focused our implementation on 
simple and simple+ local ground assignments.
For a thorough and unbiased evaluation we compared/combined the new technique
with an abstraction algorithm suited for array programs \cite{ABGRS12a} implemented in the same
tool. This section describes benchmarks
and discusses experimental results. A clear outcome from our experiments
is that abstraction/refinement and acceleration techniques can be
gainfully combined.

\noindent{\bf Benchmarks.}~
We evaluated the new algorithm on 55 programs with
arrays, each annotated with an assertion. We considered only
quantifier-free or $\forall$-as\-ser\-tions.
Our set of benchmarks comprises programs used to evaluate
the Lazy Abstraction with Interpolation for Arrays framework
\cite{ABGRS12b}
and other focused benchmarks
where abstraction diverges. These are problems involving array
manipulations as copying, comparing, searching, sorting, initializing,
testing, etc. About one third of the programs contain bugs.\footnote{The set of benchmarks can be downloaded from \url{http://www.inf.usi.ch/phd/alberti/prj/acc}; 
the tool set \textsc{mcmt} is available at
\url{http://users.mat.unimi.it/users/ghilardi/mcmt/}.}

\noindent{\bf Evaluation.}~
Experiments have been run on a machine equipped with a i7@2.66 GHz CPU
and 4GB of RAM running OS X. Time limit for each experiment has been set to 60 seconds.
%
We run {\sc mcmt} with four different configurations:
\begin{itemize}
\item {\sc Backward Search} - {\sc mcmt} executes the procedure described
at the beginning of Section \ref{sec:backward}.
\item {\sc Abstraction} - {\sc mcmt} integrates the backward
reachability algorithm with the abstraction/refinement loop
\cite{ABGRS12a}.
\item {\sc Acceleration} - The transition system is pre-processed in
order to compute accelerated transitions (when it is possible) and then
the {\sc Backward Search}' procedure is executed.
\item {\sc Accel. + Abstr.} - This configuration enables both the
preprocessing step in charge of computing accelerated transitions and
the abstraction/refinement engine on the top of the {\sc Backward
Search}' procedure.
\end{itemize}
%
The complete statistics can be found in 
Appendix~\ref{sec:experiments_appendix}.
%
\begin{figure}[t]
\centering
\subfigure[\label{fig:bs_acc}]{\includegraphics[width=.4\linewidth]{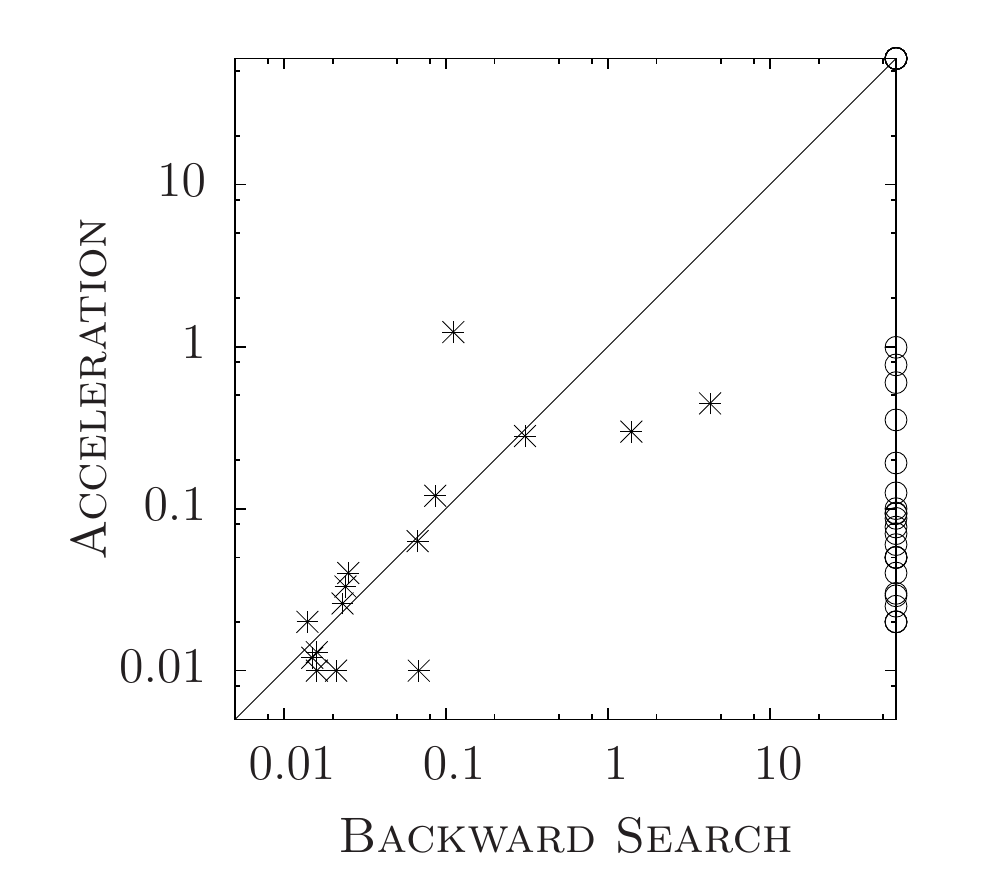}}
\subfigure[\label{fig:abs_acc}]{\includegraphics[width=.4\linewidth]{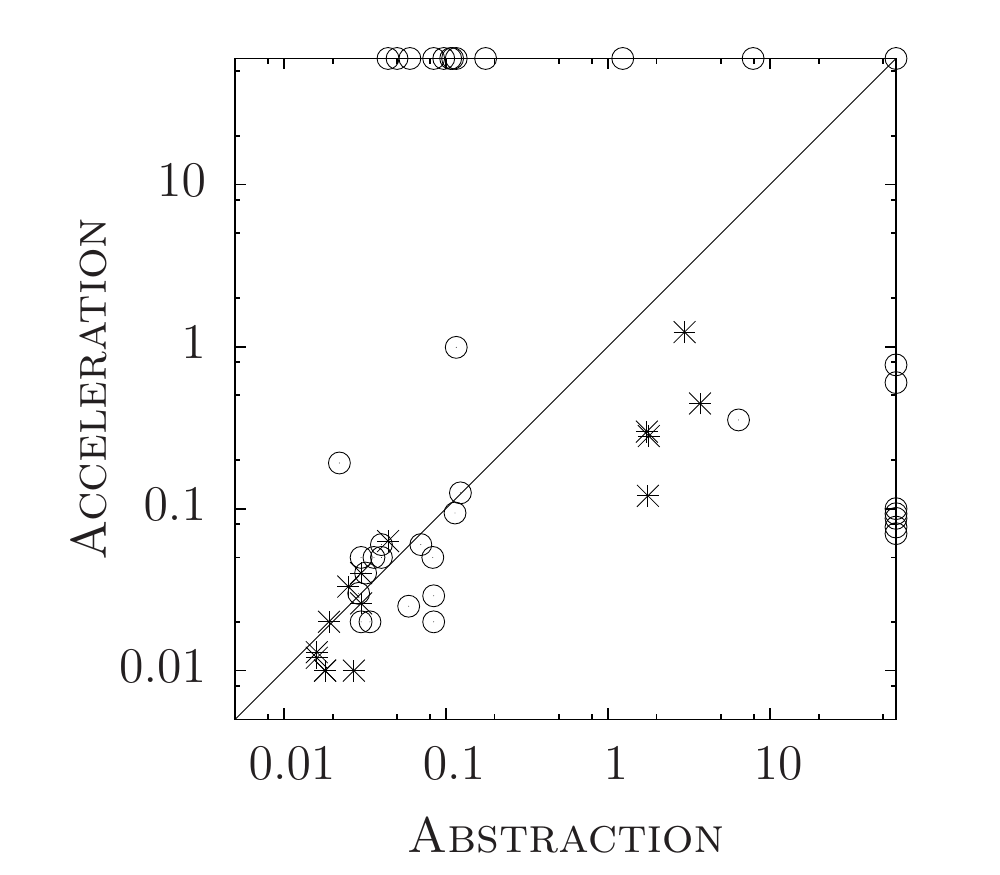}}

\subfigure[\label{fig:acc_absacc}]{\includegraphics[width=.4\linewidth]{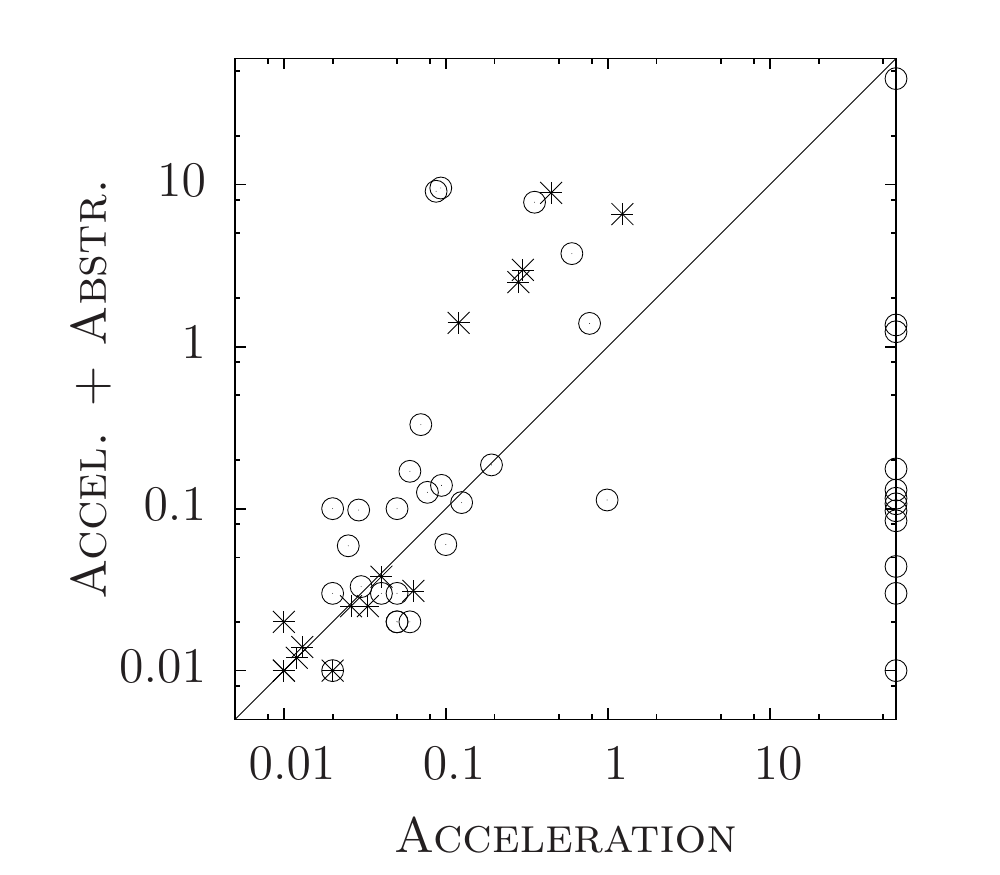}}
\subfigure[\label{fig:abs_absacc}]{\includegraphics[width=.4\linewidth]{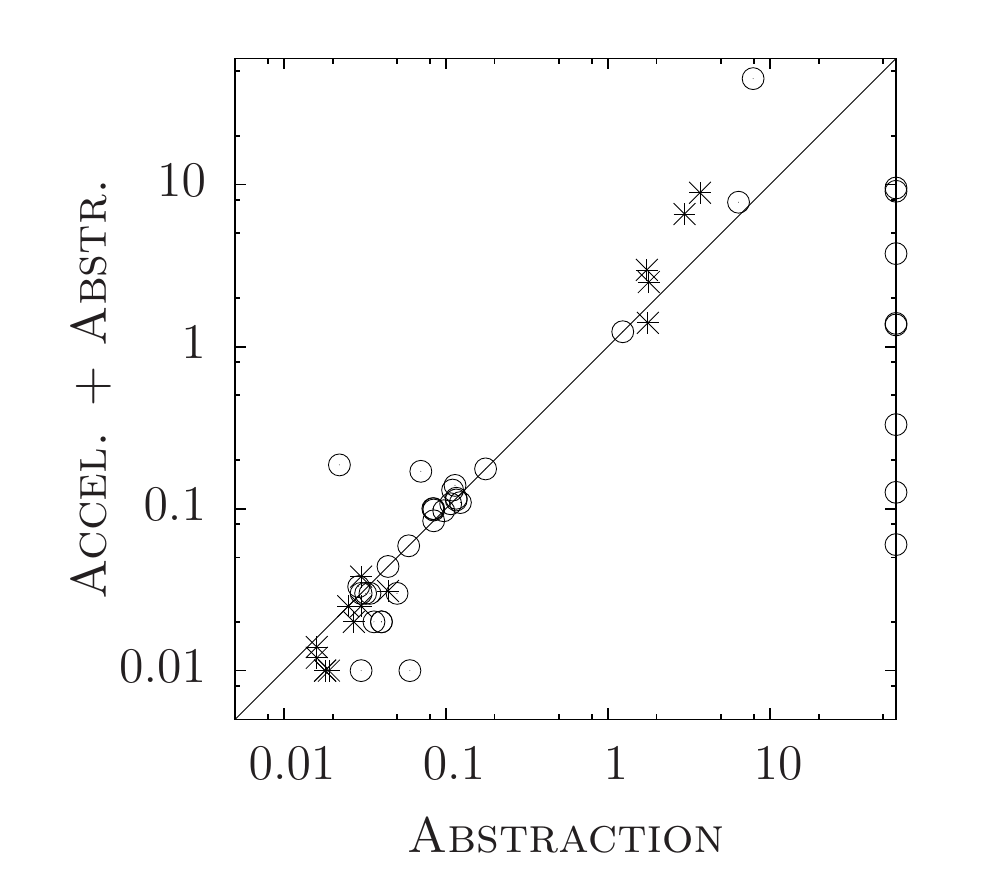}}
\caption{\label{fig:exp}
{\small 
Comparison of time for different options of {\sc Backward Search}.
Stars and circles represent buggy and correct programs
respectively.
}
}
\vspace{-0.5cm}
\end{figure}
In summary, the comparative analysis of timings presented in \figurename
\ref{fig:exp} confirms that acceleration indeed helps to avoid
divergence for problematic programs where abstraction fails.
The first comparison (\figurename \ref{fig:bs_acc}) highlights the
benefits of using acceleration: {\sc Backward Search} diverges on all 39 safe instances. Acceleration
stops divergence in 23 cases, and moreover the overhead introduced by the
preprocessing step does not affect unsafe instances.
%
\figurename \ref{fig:abs_acc}
shows that acceleration
and abstraction are two complementary techniques, since {\sc mcmt} times
out in both cases but for two different sets of programs.
%
\figurename \ref{fig:acc_absacc} and
\figurename \ref{fig:abs_absacc} attest that acceleration and
abstraction/refinement techniques mutually benefit from each other:
with both techniques
{\sc mcmt} solves all the 55
benchmarks.

\section{Conclusion and Future Work}\label{sec:conclusion}
%
We identified a class of transition relations involving
array updates that can be accelerated, showed how it is possible to
compute the accelerated transition and describe a solution for dealing
with universal quantifiers arising from the acceleration process.
%
Our paper lays theoretical
foundations for this interesting research topic and confirms by our
prototype experiments on challenging benchmarks its advantages over
stand-alone verification approaches since it's able to solve problems on
which other techniques fail to converge.

As future directions, a challenging task is to enlarge the definability result of Theorem~\ref{thm:tau+}
so as to cover classes of transitions modeling more and more loop branches arising from concrete programs. 
In addition, one may want to consider
more sophisticated strategies for instantiation in order to support acceleration.
Considering increasing larger
$\mathcal S$
or 
handling
$\Sigma^0_2$-sentences when they belong to decidable fragments
\cite{BradleyMS06,GeM09}
may lead to further improvements.

\bibliographystyle{plain}
\bibliography{references}

\begin{thebibliography}{10}

\bibitem{tacas06}
P.A. Abdulla, G.~Delzanno, N.B. Henda, and A.~Rezine.
\newblock Regular model checking without transducers.
\newblock In {\em TACAS}, volume 4424 of {\em LNCS}, pages 721--736, 2007.

\bibitem{cav06}
P.A. Abdulla, G.~Delzanno, and A.~Rezine.
\newblock Parameterized verification of infinite-state processes with global
  conditions.
\newblock In {\em CAV}, LNCS, pages 145--157, 2007.

\bibitem{ABGRS12a}
F.~Alberti, R.~Bruttomesso, S.~Ghilardi, S.~Ranise, and N.~Sharygina.
\newblock {Lazy Abstraction with Interpolants for Arrays}.
\newblock In {\em LPAR}, pages 46--61, 2012.

\bibitem{ABGRS12b}
F.~Alberti, R.~Bruttomesso, S.~Ghilardi, S.~Ranise, and N.~Sharygina.
\newblock {SAFARI: SMT-Based Abstraction for Arrays with Interpolants}.
\newblock In {\em CAV}, 2012.

\bibitem{jsat}
F.~Alberti, S.~Ghilardi, E.~Pagani, S.~Ranise, and G.P. Rossi.
\newblock {Universal Guards, Relativization of Quantifiers, and Failure Models
  in Model Checking Modulo Theories}.
\newblock {\em JSAT}, pages 29--61, 2012.

\bibitem{BallR01}
Thomas Ball and Sriram~K. Rajamani.
\newblock The slam toolkit.
\newblock In {\em CAV}, pages 260--264, 2001.

\bibitem{SMT-LIB2}
Clark Barrett, Aaron Stump, and Cesare Tinelli.
\newblock {The SMT-LIB Standard: Version 2.0}.
\newblock \href{http://www.smt-lib.org}{{\tt www.SMT-LIB.org}}, 2010.

\bibitem{BehrmannBDLPY02}
G.~Behrmann, J.~Bengtsson, A.~David, K.G. Larsen, P.~Pettersson, and W.~Yi.
\newblock Uppaal implementation secrets.
\newblock In {\em FTRTFT}, pages 3--22, 2002.

\bibitem{BeyerHMR07}
Dirk Beyer, Thomas~A. Henzinger, Rupak Majumdar, and Andrey Rybalchenko.
\newblock Path invariants.
\newblock In {\em PLDI}, pages 300--309, 2007.

\bibitem{BorrallerasLORR12}
Cristina Borralleras, Salvador Lucas, Albert Oliveras, Enric
  Rodr\'{\i}guez-Carbonell, and Albert Rubio.
\newblock Sat modulo linear arithmetic for solving polynomial constraints.
\newblock {\em J. Autom. Reasoning}, 48(1):107--131, 2012.

\bibitem{octagons}
M.~Bozga, C.~Girlea, and R.~Iosif.
\newblock Iterating octagons.
\newblock In {\em TACAS}, LNCS, pages 337--351, 2009.

\bibitem{BozgaHIKV09}
M.~Bozga, P.~Habermehl, R.~Iosif, F.~Konecn{\'y}, and T.~Vojnar.
\newblock Automatic verification of integer array programs.
\newblock In {\em CAV}, pages 157--172, 2009.

\bibitem{acceleration_CAV}
M.~Bozga, R.~Iosif, and F.~Konecny.
\newblock Fast acceleration of ultimately periodic relations.
\newblock In {\em CAV}, LNCS, 2010.

\bibitem{db2}
M.~Bozga, R.~Iosif, and Y.~Lakhnech.
\newblock Flat parametric counter automata.
\newblock {\em Fundamenta Informaticae}, (91):275--303, 2009.

\bibitem{BradleyMS06}
A.R. Bradley, Z.~Manna, and H.B. Sipma.
\newblock What's decidable about arrays?
\newblock In {\em VMCAI}, pages 427--442, 2006.

\bibitem{CaniartFLZ08}
Nicolas Caniart, Emmanuel Fleury, J{\'e}r{\^o}me Leroux, and Marc Zeitoun.
\newblock Accelerating interpolation-based model-checking.
\newblock In {\em TACAS}, pages 428--442, 2008.

\bibitem{CimattiGS10}
Alessandro Cimatti, Alberto Griggio, and Roberto Sebastiani.
\newblock Efficient generation of craig interpolants in satisfiability modulo
  theories.
\newblock {\em ACM Trans. Comput. Log.}, 12(1):7, 2010.

\bibitem{ClarkeGJLV00}
E.M. Clarke, O.~Grumberg, S.~Jha, Y.~Lu, and H.~Veith.
\newblock {Counterexample-Guided Abstraction Refinement}.
\newblock In {\em CAV}, pages 154--169, 2000.

\bibitem{db1}
H.~Comon and Y.~Jurski.
\newblock Multiple counters automata, safety analysis and presburger
  arithmetic.
\newblock In {\em CAV}, volume 1427 of {\em LNCS}, pages 268--279. Springer,
  1998.

\bibitem{presburger}
A.~Finkel and J.~Leroux.
\newblock How to compose presburger-accelerations: Applications to broadcast
  protocols.
\newblock In {\em FST TCS `02}, pages 145--156. Springer, 2002.

\bibitem{FlanaganQ02}
C.~Flanagan and S.~Qadeer.
\newblock Predicate abstraction for software verification.
\newblock In {\em POPL}, pages 191--202, 2002.

\bibitem{GeM09}
Y.~Ge and L.~de~Moura.
\newblock Complete instantiation for quantified formulas in satisfiabiliby
  modulo theories.
\newblock In {\em CAV}, pages 306--320, 2009.

\bibitem{GhilardiR10b}
S.~Ghilardi and S.~Ranise.
\newblock {MCMT: A Model Checker Modulo Theories}.
\newblock In {\em IJCAR}, pages 22--29, 2010.

\bibitem{GrafS97}
S.~Graf and H.~Sa\"{\i}di.
\newblock {Construction of Abstract State Graphs with PVS}.
\newblock In {\em CAV}, pages 72--83, 1997.

\bibitem{HendriksL02}
M.~Hendriks and K.G. Larsen.
\newblock Exact acceleration of real-time model checking.
\newblock {\em Electr. Notes Theor. Comput. Sci.}, 65(6):120--139, 2002.

\bibitem{acceleration_ATVA}
H.~Hojjat, R.~Josif, F.~Konecny, V.~Kuncak, and P.~R{\"u}mmer.
\newblock On accelerating interpolants.
\newblock In {\em ATVA}, 2012.

\bibitem{JhalaM07}
R.~Jhala and K.L. McMillan.
\newblock {Array Abstractions from Proofs}.
\newblock In {\em CAV}, 2007.

\bibitem{KV09}
L.~Kov{\'a}cs and A.~Voronkov.
\newblock {Interpolation and Symbol Elimination}.
\newblock In {\em CADE}, 2009.

\bibitem{LarrazRR13}
Daniel Larraz, Enric Rodr\'{\i}guez-Carbonell, and Albert Rubio.
\newblock Smt-based array invariant generation.
\newblock In {\em VMCAI}, pages 169--188, 2013.

\bibitem{McMillan06}
K.L. McMillan.
\newblock {Lazy Abstraction with Interpolants}.
\newblock In {\em CAV}, 2006.

\bibitem{McM08}
K.L. McMillan.
\newblock {Quantified Invariant Generation Using an Interpolating Saturation
  Prover}.
\newblock In {\em TACAS}, 2008.

\bibitem{NelOpp}
G.~Nelson and D.C. Oppen.
\newblock Simplification by cooperating decision procedures.
\newblock {\em ACM Transaction on Programming Languages and Systems},
  1(2):245--257, 1979.

\bibitem{SeghirPW09}
M.~N. Seghir, A.~Podelski, and T.~Wies.
\newblock {Abstraction Refinement for Quantified Array Assertions}.
\newblock In {\em SAS}, pages 3--18, 2009.

\bibitem{SG09}
S.~Srivastava and S.~Gulwani.
\newblock {Program Verification using Templates over Predicate Abstraction}.
\newblock In {\em PLDI}, 2009.

\bibitem{TinHar}
C.~Tinelli and M.~T. Harandi.
\newblock A new correctness proof of the {N}elson-{O}ppen combination
  procedure.
\newblock In {\em Proc. of FroCoS 1996}, pages 103--119. Kluwer, 1996.

\end{thebibliography}

\newpage

\appendix

\section{Proof of Theorem~\ref{thm:tau+}}\label{app:proofs}

In this technical Appendix, we supply the  proof of Theorem~\ref{thm:tau+}.

\begin{proof}

As a preliminary observation, we notice that the bi-implications of the kind
\begin{equation}\label{eq:p1}
(\bigvee_{n\geq 0} \psi(\ux, \bar n)) \leftrightarrow \exists y\;(y\geq 0 \wedge \psi(\ux,y))~.
\end{equation}
are valid because we  interpret our \formulae in the \emph{standard} structure of natural numbers (enriched 
with extra free symbols).

As a second preliminary observation, we notice that~\eqref{eq:iter}
can be equivalently re-writtem in the form of a bi-implication as:
\begin{equation}\label{eq:iter1}
 z= \kappa(\tu^*(\ux,y))\quad
 \leftrightarrow \quad
[\;y=\iota(\ux, z) ~\wedge ~ z= \kappa(\tu^*(\ux,\iota(\ux, z)))\;]
~~
 \end{equation}
(to see why~\eqref{eq:iter1} is equivalent to~\eqref{eq:iter} it is sufficient to apply the logical laws of pure identity).

Let us fix a local ground assignment of the form~\eqref{eq:transition_sg}; let $\ta[\td]$ indicate the $s*\vert\td\vert$-tuple of terms 
$\{a_i(d_j)\}_{1\leq i\leq s, 1\leq j\leq \vert \td\vert}$;
since $\phi_L$ and $\ttb$ are purely arithmetical over  $\{\tilde\tc,\td, \ta(\kappa(\tilde\tc)),\ta[\td] \}$, we have that they can be written as 
$\tilde\phi_L(\tilde\tc,\td, \ta(\kappa(\tilde\tc)),\ta[\td])$, $\tilde \ttb(\tilde\tc,\td ,\ta(\kappa(\tilde\tc)),\ta[\td])$, respectively, where $\tilde \phi_L, \tilde \ttb$ do not contain 
occurrences of the free function and constant symbols $\ta, \tc$.

\emph{Claim.}
As a first step, 
we  show by induction on $n$ that $\tau^n$ can be expressed as follows
(we omit here and below the conjuncts $pc=l\wedge pc'=l\wedge \td'=\td$ that do not play any role)
\begin{equation}\label{eq:transition1}
\bigwedge_{0\leq k< n}
 \tilde\phi_L(\tu^*(\tilde\tc,\bar k),\td, \ta(\kappa(\tu^*(\tilde\tc,\bar k))),\ta[\td]) ~\wedge~ \tilde\tc'=\tu^*(\tilde\tc, \bar n)~\wedge~\ta'= \lambda j.\;F(\tc, \ta,\bar n, j)
\end{equation}
where the tuple $F=F_1, \dots, F_s$ of definable functions is given by\footnote{ 
The following is an informal explanation of the formula~\eqref{eq:transition_upd} expressing iterated updates.
The point is to recognize whether a given cell $j$ has been over-written or not within the first $y$ iterations. The number 
$\iota_{h}(\tilde\tc,j)$ gives the candidate number of iterations needed to get $j$ and the further condition 
$j= \kappa_h(\tu^*(\tc,\iota_{h}(\tilde\tc,j)))$ checks whether this number is correct or not. Take for instance Example~\ref{ex:modulo} 
with $n=2$. Then if we have a single counter initialized to say 4, our iterations give values $4+2, 4+2+2, \dots$ for the updated counter. 
If we want to know whether $j$ can be
reached within less than 5 iterations, we just compute $\iota(4, j)$ which is the quotient of the integer division of $j-4$ by 2. The we need to check that 
$\iota(4,j)$ is among $0, \dots, 4=5-1$ and \emph{also} that $j$ can be really reached from $\tilde \tc=4$ by adding 2 to it  $\iota(4,j)$-times 
(the latter won't be true if  $j$ is odd).
}
\begin{equation}\label{eq:transition_upd}
\begin{split}
 F_h(\tc, \ta,y, j)~=~\lambda j.~~ \mathtt{if}~~ 0 \leq\iota_{h}(\tilde\tc,j) < y~ 
 \wedge j= \kappa_h(\tu^*(\tc,\iota_{h}(\tilde\tc,j)))
 ~\mathtt{then}\;
 \\   
\tilde t_{h}(\tu^*(\tilde\tc, \iota_{h}(\tilde\tc,j) ),\td, \ta(\kappa(\tu^*(\tilde\tc,\iota_{h}(\tilde\tc,j)))),\ta[\td])\;\mathtt{else}\;a_h[j]
 \end{split}
\end{equation}
for $h=1,\dots, s$ (here $\iota_1, \dots, \iota_s$ are the terms corresponding to $\kappa_1, \dots, \kappa_s$ according to the definition of a selector for the iterator $\tu$).

\emph{ Proof of the Claim}.
For $n=1$, notice that $\tilde\phi_L(\tu^*(\tilde\tc,0),\td, \ta(\kappa(\tu^*(\tilde\tc,0))), \ta[\td])$
is equivalent to $\tilde\phi_L(\tilde\tc,\td, \ta(\kappa(\tilde\tc)),\ta[\td])$, that $\tilde\tc'=\tu^*(\tilde\tc, \bar 1)$ is equivalent to
$\tilde\tc'=\tu(\tilde\tc)$ and that $\lambda j.\;F(\tilde\tc, \td,\ta,\bar 1, j)=wr(\ta,\kappa(\tilde\tc), \ttb(\tilde\tc,\td, \ta(\kappa(\tilde\tc)),\ta[\td]))$ holds  (the latter because 
for every $h$, $\iota_{h}(\tilde\tc,j)=0 \wedge j= \kappa_h(\tu^*(\tc,\iota_{h}(\tilde\tc,j))$
is equivalent to $j=\kappa_{h}(\tu^*(\tilde\tc, 0))=\kappa_{h}(\tilde\tc)$ by~\eqref{eq:iter1}).

For the induction step, we suppose the Claim holds for $n$ and show it for $n+1$. 
As a preliminary remark, notice that from~\eqref{eq:transition_sg}, we get not only $\td'=\td$, but also $\ta'[\td']=\ta[\td]$, because of \rm{(v)} of Definition~\ref{def:localground}. As a consequence,
after $n$ iterations of $\tau$, the values $\td, \ta[\td]$ are left unchanged; thus, for notation simplicity, we will not display anymore 
below the dependence of $\phi_L, \tilde \ttb$ on $\td, \ta[\td]$. 
We need to show that $\tau\circ \tau^n$ matches the required shape~\eqref{eq:transition1}-\eqref{eq:transition_upd} 
with $n+1$ instead of $n$. After unraveling the definitions, this splits into three sub-claims,  concerning the update of the $\tc$, the guard and the update of the $\ta$, respectively:
\begin{description}
 \item[{\rm (i)}] the equality $\tu(\tu^*(\tilde\tc, \bar n)) = \tu^*(\tilde\tc, \overline{n+1})$ is valid;
 \item[{\rm (ii)}]
 $$
 \bigwedge_{0\leq k< n} \tilde\phi_L(\tu^*(\tilde\tc,\bar k), \ta(\kappa(\tu^*(\tilde\tc,\bar k)))) ~\wedge~\tilde\phi_L(\tu^*(\tilde\tc,\overline{n}), \lambda j.\;F(\tc, \ta,\bar n, j)(\kappa(\tu^*(\tilde\tc,\overline{n}))))$$
is equivalent to $$\bigwedge_{0\leq k< n+1}
 \tilde\phi_L(\tu^*(\tc,\bar k), \ta(\kappa(\tu^*(\tc,\bar k))));$$
\item[{\rm (iii)}] $wr(\lambda j.\,F(\tc, \ta,\bar n, j),\kappa(\tu^*(\tilde\tc, \overline{n})), \tilde \ttb(\tu^*(\tilde\tc, \overline{n}),\lambda j.\, F(\tc, \ta,\bar n, j)(\kappa(\tu^*(\tilde\tc, \bar n))))$ is the same
function as $\lambda j.\;F(\tc, \ta,\overline{n+1}, j)$.
\end{description}

Indeed statement (i) is trivial, because $\tu(\tu^*(\tilde\tc, \bar n))=\tu(\tu^n(\tilde\tc))=\tu^{n+1}(\tilde\tc)=\tu^*(\tilde\tc, \overline{n+1})$ holds by~\eqref{eq:iterators}.

To show (ii), it is sufficient to check that 
\begin{equation}\label{eq:eq}
\ta(\kappa(\tu^*(\tilde\tc,\overline{n})))~=~
\lambda j.\;F(\tc, \ta,\bar n, j)(\kappa(\tu^*(\tilde\tc,\overline{n})))
\end{equation}
is true. In turn, this follows from~\eqref{eq:transition_upd} and the validity of the following implications (varying $h=1,\dots,s$)
\begin{equation}\label{eq:impl}
 \iota_{h}(\tilde\tc,j) \neq \bar n ~\to~ j\neq \kappa_h(\tu^*(\tilde\tc,\overline{n}))~
\end{equation}
(in fact, $a_h$ and $F_h$ can possibly differ only for the $j$ satisfying $0 \leq\iota_{h}(\tilde\tc,j) < \bar n$,
i.e. in particular for the $j$ such that $\iota_{h}(\tilde\tc,j)\neq n$).
To see why~\eqref{eq:impl} is valid, notice that
in view of~\eqref{eq:iter}, what~\eqref{eq:impl} says is that we cannot have simultaneously  both 
$\iota_{h}(\tilde\tc,j) = \overline{n}$ and
$\iota_{h}(\tilde\tc,j) = \bar m$, for some 
$m\neq n$:
indeed it is so by 
the definition of a function.

It remains to prove (iii); in view of~\eqref{eq:eq} just shown, we need to check that  
$$wr(\lambda j.\,F(\tc, \ta,\bar n, j),\kappa(\tu^*(\tilde\tc, \overline{n})), \tilde \ttb(\tu^*(\tilde\tc, \overline{n}),
\ta(\kappa(\tu^*(\tilde\tc,\overline{n})))))
$$ is the same
 as $\lambda j.\;F(\tc, \ta,\overline{n+1}, j)$.
For every $h=1,\dots, s$, this is split into three cases, corresponding to the validity check for  the three implications:
\begin{eqnarray*}
&
~~~~
i_h(\tilde\tc,j)<\bar n \to  wr(\lambda j.\,F_h(\tc, \ta,\bar n, j),\kappa_h(\tu^*(\tilde\tc, \overline{n})), \tilde t_h)(j)=F_h(\tc, \ta,\overline{n+1}, j)
\\ &
 ~~~~i_h(\tilde\tc,j)=\overline{n} \to  wr(\lambda j.\,F_h(\tc, \ta,\bar n, j),\kappa_h(\tu^*(\tilde\tc, \overline{n})), \tilde t_h)(j)=F_h(\tc, \ta,\overline{n+1}, j)
\\ &
 ~~~~i_h(\tilde\tc,j)>\overline{n} \to  wr(\lambda j.\,F_h(\tc, \ta,\bar n, j),\kappa_h(\tu^*(\tilde\tc, \overline{n})), \tilde t_h)(j)=F_h(\tc, \ta,\overline{n+1}, j)
\end{eqnarray*}
where we wrote simply $\tilde t_h$ instead of $\tilde t_h(\tu^*(\tilde\tc, \overline{n}),\ta(\kappa(\tu^*(\tilde\tc,\overline{n}))))$.
However, keeping in mind~\eqref{eq:impl} and~\eqref{eq:iter1},
the three implications can be rewritten as follows (the second one is split into two subcases) 
\begin{eqnarray*}
&
i_h(\tilde\tc,j)<\bar n \to \,F_h(\tc, \ta,\bar n, j)=F_h(\tc, \ta,\overline{n+1}, j)
\\ &
 i_h(\tilde\tc,j)=\overline{n} \wedge j= \kappa_h(\tu^*(\tc,\iota_{h}(\tilde\tc,j))) \to  \tilde t_h=F_h(\tc, \ta,\overline{n+1}, j)
\\ &
 i_h(\tilde\tc,j)=\overline{n} \wedge j\neq  \kappa_h(\tu^*(\tc,\iota_{h}(\tilde\tc,j))) \to \,F_h(\tc, \ta,\bar n, j)=F_h(\tc, \ta,\overline{n+1}, j)
\\ &
 i_h(\tilde\tc,j)>\overline{n} \to \,F_h(\tc, \ta,\bar n, j)=F_h(\tc, \ta,\overline{n+1}, j)
\end{eqnarray*}
The above four implications all hold by the definitions~\eqref{eq:transition_upd} of the $F_h$. 

\emph{Proof of Theorem~\ref{thm:tau+}} (continued).
As a consequence of the Claim, since  the formula 
$$\bigwedge_{0\leq k< n}\tilde\phi_L(\tu^*(\tilde\tc,\bar k),\td, \ta(\kappa(\tu^*(\tilde\tc,\bar k))), \ta[\td])$$
is equivalent to $\forall z~ (0\leq z < \bar n \to \tilde\phi_L(\tu^*(\tilde\tc,z),\td, \ta(\kappa(\tu^*(\tc,z))), \ta[\td]),$
we can use~\eqref{eq:p1} to express $\tau^+$ as
\begin{equation}\label{eq:tauaccelerated}
  \exists y>0
  \left(
   \begin{split}
  \forall z~ (0\leq\! z\! < y \!\to\! \tilde\phi_L(\tu^*(\tilde\tc,z),\td, \ta(\kappa(\tu^*(\tc,z))), \ta[\td])
  \wedge  \td'=\td\,\wedge
  \\
  \wedge~pc=l ~\wedge~pc'=l~
  \wedge~ \tilde\tc'=\tu^*(\tilde\tc, y)~\wedge~\ta'= \lambda j.\; F(\tc, \ta,y, j)
  \end{split}
  \right)
\end{equation}
The latter shows that $\tau^+$ is a $\Sigma^0_2$-assignment, as desired.
$\hfill\dashv$

\end{proof}

\section{A worked out example}\label{sec:all_diff_appendix}

Simple assignements might not be sufficient for nested loops where an array is scanned by a couple of counters, one of which is kept fixed (think for instance of inner loops of sorting algorithms).
To cope with these  more complicated cases, we introduce  a larger class of assignments (these assignments are  still local, hence covered by Theorem~\ref{thm:tau+}).
We call \emph{simple+} the ground assignments of the form 
\begin{equation}\label{eq:simple+}
 pc=l\;\wedge\; \phi_L({\tt c}, \td, \ta) \;\wedge\; pc'=l \wedge {\tt c}'={\tt c}\pm 1\,\wedge\,\td'=\td\;\wedge\;
  \ta'=wr(\ta, {\tt c}, \ttb({\tt c},\td,\ta)) 
 \end{equation}
where (i) $\td=d_1, \dots, d_l$ is a tuple of integer constants, 
(ii) the first occurrence of ${\tt c}$  in $wr(\ta, {\tt c}, \ttb({\tt c},\td, \ta))$ stands for  a
tuple of terms all identical to ${\tt c}$, (iii) the guard $\phi_L$ contains the conjuncts ${\tt c} \neq d_i$ ($1\leq i\leq l$), and (iv)  $\phi_L, \ttb$  are purely arithmetical 
over ${\tt c}, \td, a_1[{\tt c}], \dots a_s[{\tt c}], a_1[d_1], \dots, a_s[d_l]$.
Basically, simple+ local ground assignments differ from plain simple ones just because there are some `idle' indices $\td$; in addition, the counter ${\tt c}$ can also be decremented. 

The accelerated transition   for~\eqref{eq:simple+} computed by Theorem~\ref{thm:tau+}
can be re-written as follows (we write $j\in [{\tt c}, {\tt c}\pm k]$ for ${\tt c}\leq j\leq {\tt c}+ k$ or ${\tt c}-k\leq j\leq {\tt c}$, depending on whether we have increment or decrement
in~\eqref{eq:simple+}):
\begin{equation}
  \label{eq:transition_simple+}
 \exists k\left( 
\begin{split}
k> 0~\wedge~
 pc=l ~\wedge~
  \forall j~(j\in [{\tt c}, {\tt c}\pm k] \to\phi_L(j,\td, \ta))~\wedge~ pc'=l ~\wedge~ \td'=\td ~\wedge
  \\
\wedge~ {\tt c}'={\tt c}\pm k~\wedge~
  \ta'=\lambda j.\;(\mathtt{if}~ j\in [{\tt c}, {\tt c}\pm k]~\mathtt{then} ~\ttb(j,\td,\ta)~\mathtt{else}~\ta[j]) 
 \end{split}
 \right)
\end{equation}

To show how acceleration and abstraction/refinement techniques can
mutually benefit from each other, consider the procedure {\tt
allDiff}, represented by the {\it all diff 2} entry in \tablename
\ref{tab:experiments}. This function tests whether all entries of the
array ${\tt a}$ are pairwise different:
$$
\begin{aligned}
&  {\sf function}~{\tt allDiff}~(~{\sf int}~{\tt a}[{\tt N}]~):~\\
& 1~~{\tt r}={\sf true}; \\
& 2~~{\sf for}~( {\tt i} = 1;~{\tt i} < {\tt N} \wedge {\tt r}; {\tt i}\text{+}\text{+} )~\\
& 3~~~~~{\sf for}~({\tt j} = {\tt i}\text{-}1; {\tt j} \geq 0 \wedge {\tt r}; {\tt j}\text{-}\text{-})~\\
& 4~~~~~~~{\sf if}~({\tt a}[{\tt i}] = {\tt a}[{\tt j}])~{\tt r} = {\sf false};\\
& 5~~{\sf assert}~\left( {\tt r} \rightarrow \left( \forall x,y (0 \leq x < y < {\tt N}) \rightarrow ({\tt a}[x] \neq {\tt a}[y]) \right) \right) \\
\end{aligned}
$$
This function is represented by the transition system specified below (in the specification, we
omit identical updates to improve readability).
$$
I(\tv) = \left(r = 0 \wedge {\tt i} = 1 \wedge {\tt j} = 0 \wedge pc = l_1 \right)
$$
$$
U(\tv) = pc = l_4 \wedge \exists x, y. \left( 0 \leq x < y < {\tt a.Length}
\wedge a[x] = a[y] \right)
$$
$$
\begin{aligned}
& \tau_1 = \left(
\begin{aligned}
& pc = l_1 \wedge {\tt i} \leq {\tt a.Length} \wedge r = 0 ~\wedge\\
& pc' = l_2 \wedge
  {\tt j}' = {\tt i} - 1
\end{aligned}
\right) \\
& \tau_2 = \left(
\begin{aligned}
& pc = l_1 \wedge {\tt i} > {\tt a.Length} ~\wedge\\
& pc' = l_4
\end{aligned}
\right) \\
& \tau_3 = \left(
\begin{aligned}
& pc = l_1 \wedge r = 1 ~\wedge\\
& pc' = l_4
\end{aligned}
\right) \\
& \tau_4 = \left(
\begin{aligned}
& pc = l_3 ~\wedge\\
& pc' = l_1 \wedge
  {\tt i}' = {\tt i} + 1
\end{aligned}
\right) \\
& \tau_5 = \left(
\begin{aligned}
& pc = l_2 \wedge r = 0 \wedge {\tt j} \geq 0 \wedge a[{\tt i}] = a[{\tt j}] ~\wedge\\
& pc' = l_2 \wedge
  {\tt j}' = {\tt j} - 1 \wedge
  r = 1
\end{aligned}
\right) \\
& \tau_6 = \left(
\begin{aligned}
& pc = l_2 \wedge r = 0 \wedge {\tt j} \geq 0 \wedge a[{\tt i}] \neq a[{\tt j}] ~\wedge\\
& pc' = l_2 \wedge
  {\tt j}' = {\tt j} - 1
\end{aligned}
\right) \\
& \tau_7 = \left(
\begin{aligned}
& pc = l_2 \wedge j < 0 ~\wedge\\
& pc' = l_3
\end{aligned}
\right) \\
& \tau_8 = \left(
\begin{aligned}
& pc = l_2 \wedge r = 1 ~\wedge\\
& pc' = l_3
\end{aligned}
\right)
\end{aligned}
$$
For this problem, the transition we want to accelerate  is
$\tau_6$.
Accelerating transition $\tau_6$ is not
sufficient to avoid divergence caused by the outer loop, though. On the
other side, accelerating the inner loop simplifies the problem, which can be
successfully verified by the model checker by exploiting abstraction/refinement techniques in 1.36 seconds (see \tablename \ref{tab:experiments} for more details).

The acceleration of transition $\tau_6$ requires simple+-assignements 
(implemented in the current release of \textsc{mcmt}). We follow \textsc{mcmt} implementation quite closely to explain what happens. 

As a first observation, \textsc{mcmt} specification language requires that whenever two counters  ${\tt i}$ and ${\tt j}$ both occur in 
array applications $a[{\tt i}], a[{\tt j}]$ (like in $\tau_6$ above),
the guard of the transition must contain either the literal ${\tt i}={\tt j}$ or the literal ${\tt i}\neq {\tt j}$. Thus such transitions must be duplicated; in our case,
the copy of $\tau_6$ with ${\tt i}={\tt j}$ can be ignored because it has an inconsistent guard. The copy with ${\tt i} {\mathtt \neq} {\tt j}$ in the guard satisfies the conditions for being a simple+-assignment.
Thus, its acceleration, according to~\eqref{eq:transition_simple+}, can be written as
\begin{equation*}
 \exists k\left( 
\begin{split} 
  k> 0~\wedge ~\forall j~(j\in [{\tt j}, {\tt j}\pm k] \to  {\tt i}\neq j \wedge r = 0 \wedge j \geq 0 \wedge a[{\tt i}] \neq a[j] )
~\wedge~
\\ 
 \wedge~ pc=2 ~\wedge pc'=2 ~\wedge~ {\tt i}'={\tt i}~\wedge~ r'=r ~\wedge ~ {\tt j}'={\tt j}\pm k~\wedge  ~\ta'=\ta 
 \end{split}
 \right)
\end{equation*}
In the current release, \textsc{mcmt} is able to compute by itself the above accelerated transition and thus to certify safety of {\tt
allDiff} procedure.

\section{Experimental evaluation}\label{sec:experiments_appendix}
Complete statistics for the experiments performed with {\sc mcmt} are
reported in \tablename \ref{tab:experiments}.
Benchmarks have been taken from different sources:
\begin{itemize}
\item The benchmarks ``filter test'', ``max in array test'', ``filter'', ``max in array 1'', ``max in
array 2'', ``max in array 3'' have been taken and/or adapted from
programs on \url{http://proval.lri.fr/}.
\item The ``heap as array'' program has been suggested by K. Rustan M.
Leino and it is reported in \figurename \ref{fig:heap_program}.
\item all the programs $pN$ have been taken from ``I. Dillig, T. Dillig,
and A. Aiken. Fluid updates: Beyond strong vs. weak updates. In ESOP,
pages 246-266, 2010.''.
\item The ``bubble sort'' example comes from the ``Eureka'' project
\url{http://www.ai-lab.it/eureka} and has been used as a benchmark in the paper
``A. Armando, M. Benerecetti, and J. Mantovani. Abstraction refinement
of linear programs with arrays. In TACAS, pages 373-388, 2007.''
\item ``all diff 1'' and ``all diff 2'' have been suggested by Madhusudan
Parthasarath and his group. They represent two different encoding of an
algorithm that initializes an array to different values and then check
if the array has been correctly initialized.
\item ``compare'', ``copy'', ``find 1'', ``find 2'', ``init'', ``init test'', ``partition'' have been taken/adapted from ``Krystof Hoder, Laura Kov\'acs, Andrei Voronkov: Interpolation and Symbol Elimination in Vampire. In IJCAR, pages 188-195, 2010''.
\item The ``linear search'' program is used as a running example on the
book ``Aaron R. Bradley, Zohar Manna: The calculus of computation -
decision procedures with applications to verification. Springer 2007,
pp. I-XV, 1-366''.
\item ``selection sort'' example has been used in ``M. N. Seghir, A.
Podelski, and T. Wies. Abstraction Refinement for Quantified
Array Assertions. In SAS, pages 3-18, 2009.''
\item ``strcmp'', ``strcpy'' and ``strlen'' have been adapted from the
standard string C library.
\end{itemize}
The benchmarks named with `` *
test '' refer to benchmarks with quantified assertions substituted by a for loop. For those
programs, the postcondition does not have quantifiers: in these benchmarks it is even harder to come up with a quantified
safe inductive invariant to prove that the program is correct. 
Thus, it is a remarkable fact that our tool can automatically synthetize such invariants. 
\begin{table}\scriptsize
\centering
\begin{tabular}{|l|l|l|l|l|l|} \hline
~{\sc Program}     ~~& ~{\sc Status} ~~&~ {\sc No options} ~~&~ {\sc Abstraction} ~~&~ {\sc Acceleration} ~~&~ {\sc Accel. + Abstr.} \\ \hline \hline
~filter test       ~~& ~safe         ~~& ~$\times$           & ~0.08                & ~$\times$             & ~0.08                  \\
~heap as array     ~~& ~safe         ~~& ~$\times$           & ~0.12                & ~$\times$             & ~0.12                  \\
~init test         ~~& ~safe         ~~& ~$\times$           & ~11.72               & ~$\times$             & ~0.16                  \\
~max in array test ~~& ~safe         ~~& ~$\times$           & ~0.18                & ~$\times$             & ~0.18                  \\
~p01               ~~& ~safe         ~~& ~$\times$           & ~$\times$            & ~0.09                 & ~9.08                  \\
~p02               ~~& ~safe         ~~& ~$\times$           & ~$\times$            & ~0.09                 & ~9.52                  \\
~p03               ~~& ~safe         ~~& ~$\times$           & ~0.11                & ~0.09                 & ~0.14                  \\
~p08               ~~& ~safe         ~~& ~$\times$           & ~0.12                & ~0.12                 & ~0.11                  \\
~p09               ~~& ~safe         ~~& ~$\times$           & ~0.12                & ~0.99                 & ~0.11                  \\
~p14               ~~& ~safe         ~~& ~$\times$           & ~6.39                & ~0.35                 & ~7.78                  \\
~p17               ~~& ~safe         ~~& ~$\times$           & ~0.02                & ~0.19                 & ~0.19                  \\
~p04               ~~& ~unsafe       ~~& ~0.02               & ~0.03                & ~0.03                 & ~0.02                  \\
~p10               ~~& ~unsafe       ~~& ~0.07               & ~0.04                & ~0.06                 & ~0.03                  \\
~p11               ~~& ~unsafe       ~~& ~0.02               & ~0.03                & ~0.04                 & ~0.04                  \\
~p15               ~~& ~unsafe       ~~& ~1.4                & ~1.74                & ~0.3                  & ~2.97                  \\
~p16               ~~& ~unsafe       ~~& ~4.27               & ~3.70                & ~0.45                 & ~8.89                  \\
~p18               ~~& ~unsafe       ~~& ~0.01               & ~0.02                & ~0.01                 & ~0.01                  \\
~p19               ~~& ~unsafe       ~~& ~0.02               & ~0.02                & ~0.01                 & ~0.01                  \\
~p20               ~~& ~unsafe       ~~& ~0.02               & ~0.02                & ~0.03                 & ~0.02                  \\
~p22               ~~& ~unsafe       ~~& ~0.02               & ~0.03                & ~0.02                 & ~0.17                  \\ \hline 
~all diff 1        ~~& ~safe         ~~& ~$\times$           & ~$\times$            & ~0.08                 & ~0.13                  \\
~all diff 2        ~~& ~safe         ~~& ~$\times$           & ~$\times$            & ~$\times$             & ~1.36                  \\
~bubble sort       ~~& ~safe         ~~& ~$\times$           & ~1.23                & ~$\times$             & ~1.23                  \\
~compare           ~~& ~safe         ~~& ~$\times$           & ~0.04                & ~$\times$             & ~0.04                  \\
~copy              ~~& ~safe         ~~& ~$\times$           & ~0.03                & ~0.03                 & ~0.03                  \\
~filter            ~~& ~safe         ~~& ~$\times$           & ~0.11                & ~$\times$             & ~0.11                  \\
~find 1            ~~& ~safe         ~~& ~$\times$           & ~0.06                & ~$\times$             & ~0.06                  \\
~find 2            ~~& ~safe         ~~& ~$\times$           & ~0.07                & ~0.06                 & ~0.17                  \\
~init              ~~& ~safe         ~~& ~$\times$           & ~0.08                & ~0.03                 & ~0.1                   \\
~linear search     ~~& ~safe         ~~& ~$\times$           & ~0.04                & ~0.05                 & ~0.02                  \\
~max in array 1    ~~& ~safe         ~~& ~$\times$           & ~0.1                 & ~$\times$             & ~0.1                   \\
~max in array 2    ~~& ~safe         ~~& ~$\times$           & ~0.11                & ~$\times$             & ~0.13                  \\
~max in array 3    ~~& ~safe         ~~& ~$\times$           & ~0.06                & ~$\times$             & ~0.01                  \\
~minusN            ~~& ~safe         ~~& ~$\times$           & ~$\times$            & ~0.77                 & ~1.4                   \\
~partition         ~~& ~safe         ~~& ~$\times$           & ~0.05                & ~$\times$             & ~0.03                  \\
~selection sort    ~~& ~safe         ~~& ~$\times$           & ~7.87                & ~$\times$             & ~45.07                 \\
~strcat 1          ~~& ~safe         ~~& ~$\times$           & ~$\times$            & ~$\times$             & ~3.5                   \\
~strcat 2          ~~& ~safe         ~~& ~$\times$           & ~$\times$            & ~$\times$             & ~3.62                  \\
~strcmp            ~~& ~safe         ~~& ~$\times$           & ~0.04                & ~0.06                 & ~0.02                  \\
~strcpy            ~~& ~safe         ~~& ~$\times$           & ~0.03                & ~0.02                 & ~0.01                  \\
~strlen            ~~& ~safe         ~~& ~$\times$           & ~$\times$            & ~0.1                  & ~0.06                  \\
~p01               ~~& ~safe         ~~& ~$\times$           & ~0.08                & ~0.02                 & ~0.1                   \\
~p02               ~~& ~safe         ~~& ~$\times$           & ~0.08                & ~0.05                 & ~0.1                   \\  
~p03               ~~& ~safe         ~~& ~$\times$           & ~0.03                & ~0.02                 & ~0.03                  \\
~p08               ~~& ~safe         ~~& ~$\times$           & ~0.03                & ~0.05                 & ~0.03                  \\
~p09               ~~& ~safe         ~~& ~$\times$           & ~0.03                & ~0.04                 & ~0.03                  \\
~p18               ~~& ~safe         ~~& ~$\times$           & ~$\times$            & ~0.07                 & ~0.33                  \\
~p20               ~~& ~safe         ~~& ~$\times$           & ~0.04                & ~0.05                 & ~0.02                  \\ 
~p04               ~~& ~unsafe       ~~& ~0.07               & ~0.02                & ~0.01                 & ~0.01                  \\
~p11               ~~& ~unsafe       ~~& ~0.01               & ~0.02                & ~0.02                 & ~0.01                  \\
~p14               ~~& ~unsafe       ~~& ~0.31               & ~1.79                & ~0.28                 & ~2.5                   \\
~p15               ~~& ~unsafe       ~~& ~0.09               & ~1.77                & ~0.12                 & ~1.4                   \\
~p16               ~~& ~unsafe       ~~& ~0.11               & ~2.97                & ~1.23                 & ~6.57                  \\
~p17               ~~& ~unsafe       ~~& ~0.02               & ~0.03                & ~0.01                 & ~0.02                  \\
~p19               ~~& ~unsafe       ~~& ~0.02               & ~0.02                & ~0.01                 & ~0.01                  \\ \hline
\end{tabular}
\caption{\label{tab:experiments}Experimental results for different
options. Time limit has been set to $60$ seconds, and $\times$ denotes a
timeout. Programs in the first part of the table are annotated with
quantifier-free assertions, those in the second part have
$\forall$-assertions. Notably, when abstraction and acceleration is
combined {\sc mcmt} is able to verify all the 55 programs.}
\end{table}
\begin{figure}[t]
{\footnotesize
\begin{tabbing}
 assssssssssssssssssssssssss \= s \= s \= s \= s \= s \kill
 \> {\sf var} ${\tt Heap}$: [{\sf int}] {\sf int}; \\
 \> {\sf const} {\sf unique} ${\tt F}$: {\sf int}; ~~{\sf const} {\sf unique} ${\tt G}$: {\sf int}; \\
 \> {\sf const} ${\tt F\_final}$: {\sf int}; ~~{\sf const} ${\tt G\_final}$: {\sf int}; \\
 \> {\sf procedure} {\sf HeapP} ( ) \\
 \> \>  {\sf modifies} ${\tt Heap}$; \\
 \> \>  {\sf requires} ${\tt F\_final} > 0 \wedge {\tt G\_final} > 0$; \\
 \> \>  {\sf ensures} ${\tt Heap}[{\tt F}] = {\tt F\_final} \wedge {\tt Heap}[{\tt G}] = {\tt G\_final}$; \\ 
 \> \{ \\
 \> \>  ${\tt Heap}[{\tt F}]$ := 0; ~ ${\tt Heap}[{\tt G}]$ := ${\tt G\_final}$; \\ 
 \> \>  {\sf while} (${\tt Heap}[{\tt F}] < {\tt F\_final}$) \\
 \> \> \>   {\sf invariant} ${\tt Heap}[{\tt F}] \leq {\tt F\_final}$; \\
 \> \>  \{ \\
 \> \> \>   ${\tt Heap}[{\tt F}]$ := ${\tt Heap}[{\tt F}] + 1$;\\
 \> \>  \} \\
 \> \}
\end{tabbing}
}
\caption{\label{fig:heap_program}The ``heap as array'' program.}
\end{figure}


\end{document}